\newtheorem{lemma}{Lemma}
\definecolor{codegreen}{rgb}{0,0.6,0}
\definecolor{codegray}{rgb}{0.5,0.5,0.5}
\definecolor{codepurple}{rgb}{0.58,0,0.82}
\definecolor{backcolour}{rgb}{0.95,0.95,0.92}
\lstdefinestyle{mystyle}{
    float=tp,
    floatplacement=tbp,
    commentstyle=\color{codegreen},
    keywordstyle=\color{magenta},
    numberstyle=\tiny\color{codegray},
    stringstyle=\color{codepurple},
    basicstyle=\ttfamily\small,
    breakatwhitespace=false,         
    breaklines=true,                 
    captionpos=b,                    
    keepspaces=true,                 
    numbers=left,                    
    numbersep=5pt,                  
    showspaces=false,                
    showstringspaces=false,
    showtabs=false,                  
    tabsize=2,
    frame=tb,
}
\title{End-to-End differentiable construction of molecular mechanics force fields}
\author[1, 3, 4]{Yuanqing~Wang~(ORCID:~\href{https://orcid.org/0000-0003-4403-2015}{0000-0003-4403-2015})}
\author[1, 2\authfn{2}]{Josh~Fass~(ORCID:~\href{http://orcid.org/0000-0003-3719-266X}{0000-0003-3719-266X})}
\author[1, 2]{Benjamin~Kaminow~(ORCID:~\href{http://orcid.org/0000-0002-2266-3353}{0000-0002-2266-3353})}
\author[1]{John~E.~Herr~(ORCID:~\href{http://orcid.org/0000-0001-5639-1520}{0000-0001-5639-1520})}
\author[1]{Dominic Rufa~(ORCID:~\href{https://orcid.org/0000-0003-0930-9445}{0000-0003-0930-9445})}
\author[1, 2]{Ivy Zhang~(ORCID:~\href{https://orcid.org/0000-0003-0628-6276}{0000-0003-0628-6276})}
\author[1]{Iván Pulido~(ORCID:~\href{https://orcid.org/0000-0002-7178-8136}{0000-0002-7178-8136})}
\author[1]{Mike Henry~(ORCID:~\href{https://orcid.org/0000-0002-3870-9993}{0000-0002-3870-9993})}
\author[1]{John~D.~Chodera~(ORCID:~\href{http://orcid.org/0000-0003-0542-119X}{0000-0003-0542-119X})}
\affil[1]{Computational and Systems Biology Program, Sloan Kettering Institute, Memorial Sloan Kettering Cancer Center, New York, NY 10065}
\affil[2]{Tri-Institutional PhD Program in Computational Biology and Medicine, Weill Cornell Medical College, Cornell University,
New York, NY 10065}
\affil[3]{Physiology, Biophysics, and System Biology Ph.D.\ Program, Weill Cornell Medical College, Cornell University, New York, NY 10065}
\affil[4]{MFA.\ Program in Creative Writing, Division of Humanities and Arts, the City College of New York, the City University of New York, New York, NY 10031}
\begin{document}

\maketitle

%%%%%%%%%%%%%%%%%%%%%%%%%%%%%%%%%%%%%%%%%%%%%%%%%%%%%%%%%%%%%%%%%%%%%%%%%%%%%%%%
% INTRODUCTION
%%%%%%%%%%%%%%%%%%%%%%%%%%%%%%%%%%%%%%%%%%%%%%%%%%%%%%%%%%%%%%%%%%%%%%%%%%%%%%%%

\begin{abstract}
    Molecular mechanics (MM) potentials have long been a workhorse of computational chemistry.
    Leveraging accuracy and speed, these functional forms find use in a wide variety of applications in biomolecular modeling and drug discovery, from rapid virtual screening to detailed free energy calculations.
    Traditionally, MM potentials have relied on human-curated, inflexible, and poorly extensible discrete chemical perception rules (\textit{atom types}) for applying parameters to small molecules or biopolymers, making it difficult to optimize both types and parameters to fit quantum chemical or physical property data.
    Here, we propose an alternative approach that uses \textit{graph neural networks} to perceive chemical environments, producing continuous atom embeddings from which valence and nonbonded parameters can be predicted using invariance-preserving layers.
    Since all stages are built from smooth neural functions, the entire process---spanning chemical perception to parameter assignment---is modular and end-to-end differentiable with respect to model parameters, allowing new force fields to be easily constructed, extended, and applied to arbitrary molecules.
    We show that this approach is not only sufficiently expressive to reproduce legacy atom types, but that it can learn to accurately reproduce and extend existing molecular mechanics force fields.
    Trained with arbitrary loss functions, it can construct entirely new force fields self-consistently applicable to both biopolymers and small molecules directly from quantum chemical calculations, with superior fidelity than traditional atom or parameter typing schemes.
    When adapted to simultaneously fit partial charge models, espaloma delivers high-quality partial atomic charges orders of magnitude faster than current best-practices with little inaccuracy.
    When trained on the same quantum chemical small molecule dataset used to parameterize the Open Force Field ("Parsley") \textsc{openff-1.2.0} small molecule force field augmented with a peptide dataset, the resulting espaloma model shows superior accuracy vis-à-vis experiments in computing relative alchemical free energy calculations for a popular benchmark set.
    This approach is implemented in the free and open source package {\bf espaloma}, available at \url{https://github.com/choderalab/espaloma}.
\end{abstract}

%%%%%%%%%%%%%%%%%%%%%%%%%%%%%%%%%%%%%%%%%%%%%%%%%%%%%%%%%%%%%%%%%%%%%%%%%%%%%%%%
% INTRODUCTION
%%%%%%%%%%%%%%%%%%%%%%%%%%%%%%%%%%%%%%%%%%%%%%%%%%%%%%%%%%%%%%%%%%%%%%%%%%%%%%%%

Molecular mechanics (MM) force fields---physical models that abstract molecular systems as atomic point masses that interact via nonbonded interactions and valence (bond, angle, torsion) terms---have powered \textit{in silico} modeling to provide key insights and quantitative predictions in all aspects of chemistry, from drug discovery to materials science~\cite{ponder2003force, van2005gromacs, case2005amber, phillips2005scalable, calculations2007theory, WANG2014979, li2015molecular, sun2016compass, harder2016opls3}.
While recent work in quantum machine learning (QML) potentials has demonstrated how flexibility in functional forms and training strategies can lead to increased accuracy~\cite{smith2017ani,smith2018less,smith2019approaching,devereux2020extending,schutt2018schnet,batzner2021se,10.1093/bib/bbab158}, these QML potentials are orders of magnitude slower than popular molecular mechanics potentials even on expensive hardware accelerators, as they involve orders of magnitude more floating point operations per energy or force evaluation.

On the other hand, the simpler physical energy functions of MM models are compatible with highly optimized implementations that can exploit a wide variety of hardware~\cite{eastman2017openmm,van2005gromacs,harvey2009acemd,salomon2013routine,schoenholz2019jax,wang2020differentiable}, but rely on complex and inextensible legacy \textit{atom typing schemes} for parameter assignment~\cite{mobley2018escaping}:
\begin{itemize}
\item First, a set of rules is used to classify atoms into discrete \textit{atom types} that must encode all information about an atom's chemical environment needed in subsequent parameter assignment steps.
\item Next, a discrete set of bond, angle, and torsion types is determined by composing the atom types involved in the interaction.
\item Finally, the parameters attached to atoms, bonds, angles, and torsions are assigned according to a look-up table of composed parameter types.
\end{itemize}

Consequently, atoms, bonds, angles, or torsions with distinct chemical environments that happen to fall into the same expert-derived discrete type class are forced to share the same set of MM parameters, potentially leading to low resolution and potentially poor accuracy.
On the other hand, the explosion of number of discrete parameter classes describing equivalent chemical environments required by traditional MM atom typing schemes not only poses significant challenges to extending the space of atom types~\cite{mobley2018escaping}, optimizing these independently has the potential to compromise generalizabilty and lead to overfitting.
Even with modern MM parameter optimization frameworks~\cite{Wang2012, Wang2014, Qiu2019} and sufficient data, parameter optimization is only feasible in the continuous parameter space defined by these fixed atom types, while the mixed discrete-continuous optimization problem---jointly optimizing types and parameters---is intractable.

%%%%%%%%%%%%%%%%%%%%%%%%%%%%%%%%%%%%%%%%%%%%%%%%%%%%%%%%%%%%%%%%%%%%%%%%%%%%%%%%
% Espaloma overview figure
%%%%%%%%%%%%%%%%%%%%%%%%%%%%%%%%%%%%%%%%%%%%%%%%%%%%%%%%%%%%%%%%%%%%%%%%%%%%%%%%

\begin{figure}[tb]
    \centering
    \includegraphics[width=0.9\textwidth]{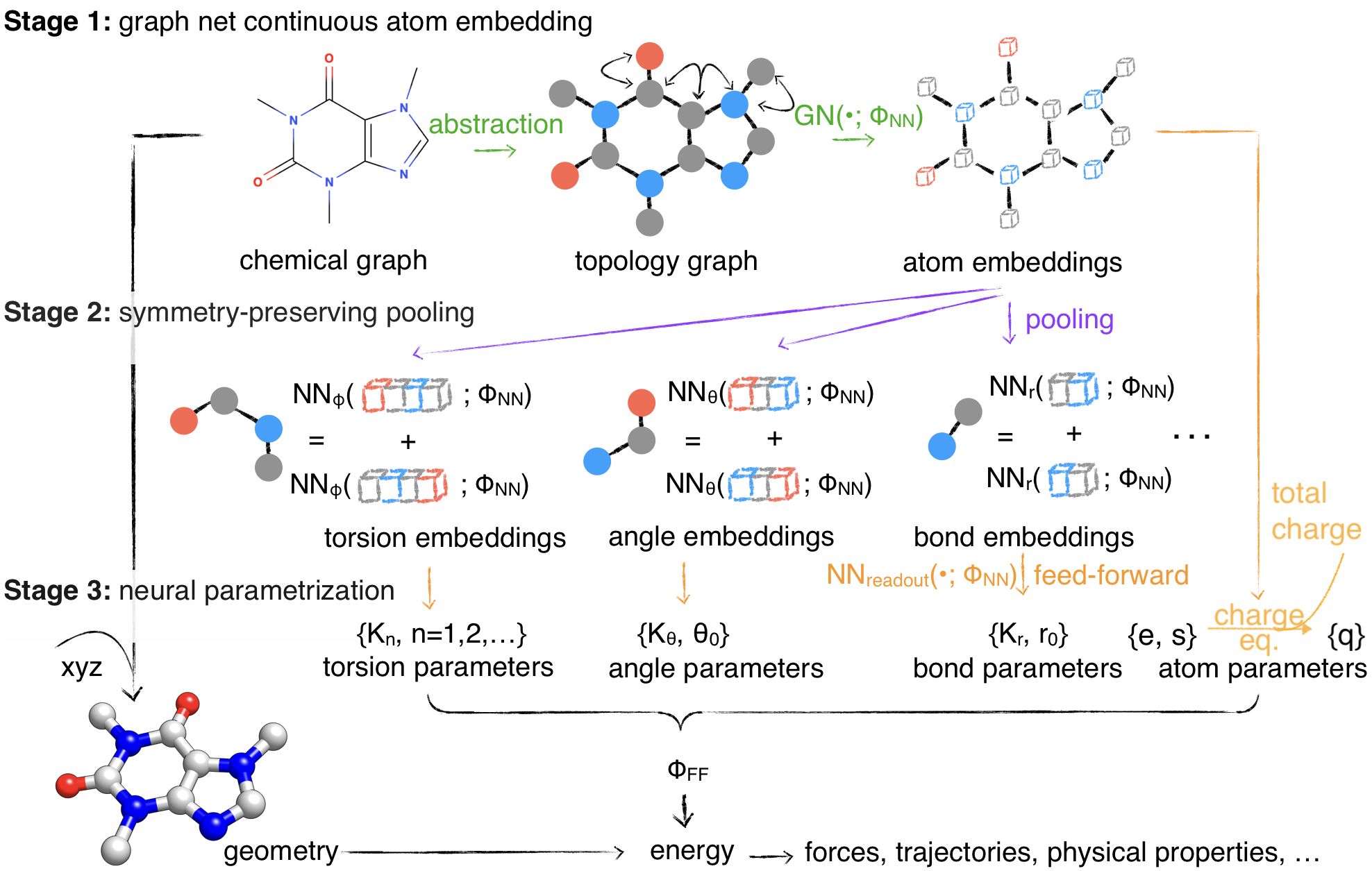}
    %\includegraphics[width=0.8\textwidth]{}
    % A graph neural network can be used to emit force field parameters
    \caption{\label{fig:overview}
    \textbf{Espaloma is an end-to-end differentiable molecular mechanics parameter assignment scheme for arbitrary organic molecules.} 
    espaloma (\textit{\textbf{e}xtendable \textbf{s}urrogate \textbf{p}otenti\textbf{al} \textbf{o}ptimized by \textbf{m}ess\textbf{a}ge-passing}) is a modular approach for directly computing molecular mechanics force field parameters $\Phi_\mathrm{FF}$ from a chemical graph $\mathcal{G}$ such as a small molecule or biopolymer via a process that is fully differentiable in the model parameters $\Phi_\mathrm{NN}$.
    In {\bf Stage 1}, a graph neural network is used to generate continuous latent atom embeddings describing local chemical environments from the chemical graph.
    In {\bf Stage 2}, these atom embeddings are transformed into feature vectors that preserve appropriate symmetries for atom, bond, angle, and proper/improper torsion inference via Janossy pooling.
    In {\bf Stage 3}, molecular mechanics parameters are directly predicted from these feature vectors using feed-forward neural nets.
    This parameter assignment process is performed once per molecular species, allowing the potential energy to be rapidly computed using standard molecular mechanics or molecular dynamics frameworks thereafter.
    The collection of parameters $\Phi_\mathrm{NN}$ describing the espaloma model can be considered as the equivalent complete specification of a traditional molecular mechanics force field such as GAFF~\cite{wang2004development,wang2006automatic}/AM1-BCC~\cite{jakalian2000fast,jakalian2002fast} in that it encodes the equivalent of traditional typing rules, parameter assignment tables, and even partial charge models.
    This final stage is modular, and can be easily extended to incorporate additional molecular mechanics parameter classes, such as parameters for a charge-equilibration model (Section~\ref{sec:charge-equilibration}), point polarizabilities, or valence-coupling terms for Class~II molecular mechanics force fields~\cite{maple1994derivation,hwang1994derivation}.
    }
\end{figure}

Here, we present a continuous alternative to traditional discrete atom typing schemes that permits full \emph{end-to-end differentiable optimization} of both typing and parameter assignment stages, allowing an entire force field to be built, extended, and applied using standard automatic differentiation machine learning frameworks such as PyTorch~\cite{NEURIPS2019_9015}, TensorFlow~\cite{tensorflow2015-whitepaper}, or JAX~\cite{jax2018github} (\FIG{overview}).
Graph neural networks have recently emerged as a powerful way to learn chemical properties of atoms, bonds, and molecules for biomolecular species (both small organic molecules and biopolymers), which can be considered isomorphic with their graph representations~\cite{xu2018powerful, wang2019graph, DBLP:journals/corr/KipfW16, battaglia2018relational, Du:ArXiv171010370CsStat:2018, wu2019simplifying, wang2019deep, wang2019dynamic, duvenaud2015convolutional, gilmer2017neural}.
We hypothesize that graph neural networks operating on molecules have expressiveness that is at least equivalent to---and likely much greater than---expert-derived typing rules, with the advantage of being able to smoothly interpolate between representations of chemical environments (such as accounting for fractional bond orders~\cite{stern2020capturing}).
We provide empirical evidence for this in Section~\ref{sec:stage1}.
%\begin{itemize}
%    \item Graph neural networks can recover legacy atom types in a supervised classification task (Section~\ref{sec:stage1}).
%    \item Graph neural networks, when combined with appropriate pooling functions and a subsequent prediction stage, can learn existing force fields when trained to fit the corresponding MM energies and forces, and generalize this force field to new molecules (Section~\ref{sec:mm_fitting}).
%\end{itemize}

Next, we demonstrate the utility of such a model (which we call the \textit{\textbf{e}xtendable \textbf{s}urrogate \textbf{p}otenti\textbf{al} \textbf{o}ptimized by \textbf{m}essage-p\textbf{a}ssing}, or \textbf{espaloma}) to construct end-to-end optimizable force fields with continuous atom types.
Espaloma assigns molecular mechanics parameters from a molecular graph in three differentiable stages (\FIG{overview}):
\begin{itemize}
    \item \textbf{Stage 1:} Continuous atom embeddings are constructed using graph neural networks to perceive chemical environments (Section~\ref{sec:stage1}).
    \item \textbf{Stage 2:} Continuous bond, angle, and torsion embeddings are constructed using pooling that preserves appropriate symmetries (Section~\ref{sec:stage2}).
    \item \textbf{Stage 3:} Molecular mechanics force field parameters are computed from atom, bond, angle, and torsion embeddings using feed-forward networks (Section~\ref{sec:stage3}).
\end{itemize}
Additional molecular mechanics parameter classes (such as point polarizabilities, valence coupling terms, or even parameters for charge-transfer models~\cite{ko2021fourth}) can easily be added in a modular manner.

Similar to legacy molecular mechanics parameter assignment infrastructures, molecular mechanics parameters are assigned \textit{once} for each system, and can be subsequently used to compute energies and forces or carry out molecular simulations with standard molecular mechanics packages.
Unlike traditional legacy force fields, espaloma model parameters $\Phi_\mathrm{NN}$---which define the entire \emph{process} by which molecular mechanics force field parameters $\Phi_\mathrm{FF}$ are generated \textit{ad hoc} for a given molecule---can easily be fit to data from scratch using standard, highly portable, high-performance machine learning frameworks that support automatic differentiation.
%Here, we show how Espaloma can be used to directly fit quantum chemical data to produce a complete force field (Section \ref{sec:qm-fitting}).
%, as well as fit experimental physical property data directly by differentiating loss functions that include rigorously computed free energies (Section~\ref{sec:fitting-hydration-free-energies}).

Here, we demonstrate that espaloma provides a sufficiently flexible representation to both learn to apply existing MM force fields and to generalize them to new molecules (Section~\ref{sec:mm_fitting}).
Espaloma's modular loss function enables force fields to be learned directly from quantum chemical energies (Section~\ref{sec:qm-fitting}), partial charges (Section~\ref{sec:charge-equilibration}), or both.
The resulting force fields can generate self-consistent parameters for small molecules, biopolymers~(Section~\ref{sec:biopolymers}), and covalent adducts~(Section~\ref{sec:covalent-ligands}).
Finally, an espaloma model fit to the same quantum chemical dataset used to build the Open Force Field OpenFF ("Parsley") \textsc{openff-1.2.0} small molecule force field, augmented with peptide quantum chemical data, can outperform it in an out-of-sample kinase:inhibitor alchemical free energy benchmark (Section~\ref{sec:md-simulation-details}).

%%%%%%%%%%%%%%%%%%%%%%%%%%%%%%%%%%%%%%%%%%%%%%%%%%%%%%%%%%%%%%%%%%%%%%%%%%%%%%%%
% MOLECULAR MECHANICS FORCE FIELDS
%%%%%%%%%%%%%%%%%%%%%%%%%%%%%%%%%%%%%%%%%%%%%%%%%%%%%%%%%%%%%%%%%%%%%%%%%%%%%%%%

\section{Espaloma: End-to-end differentiable MM parameter assignment}
\label{sec:end-to-end-differentiable-approach}
First, we describe how our proposed framework, {\bf espaloma} (\FIG{overview}), operates analogously to legacy force field typing schemes to generate MM parameters $\Phi_\text{FF}$ from a molecular graph $\mathcal{G}$ and neural model parameters $\Phi_\text{NN}$,
\begin{equation}
\Phi_\text{FF} \leftarrow \texttt{espaloma} (\mathcal{G}, \Phi_\text{NN}),
\end{equation}
which can subsequently be used to compute the MM energy (as in Equation~\ref{mm_graph}) for any conformation.
A brief graph-theoretic overview of molecular mechanics force fields is provided in the Appendix (Section~\ref{sec:molecular-mechanics-forcefields}).

% We target the MM functional form (see SI \ref{sec:molecular-mechanics-forcefields}):
% \begin{eqnarray}
%     U(\mathbf{x}; \Phi_\text{FF}) &=& 
%     \sum_{i < j} 4 \epsilon_{ij} \left[ \left(\frac{\sigma_{ij}}{r}\right)^{12} - \left(\frac{\sigma_{ij}}{r}\right)^6 \right] + C\frac{q_i q_j}{r} \nonumber \\
%     &+& \sum\limits_{\text{bonds}} \frac{k_r}{2} (r-r_0)^2 
%     + \sum\limits_{\text{angles}} \frac{k_\theta}{2} (\theta - \theta_0)^2 
%     + \sum\limits_{\text{torsions}} \sum_n k_{\phi,n} \cos [n (\phi - \phi_n)].
% \end{eqnarray}

\subsection{\textbf{Stage 1}: Graph neural networks generate a continuous atom embedding, replacing legacy discrete atom typing}
\label{sec:stage1}
We propose to use graph neural networks~\cite{xu2018powerful, wang2019graph, DBLP:journals/corr/KipfW16, battaglia2018relational, Du:ArXiv171010370CsStat:2018, wu2019simplifying, wang2019deep, wang2019dynamic, duvenaud2015convolutional, gilmer2017neural} as a replacement for rule-based chemical environment perception~\cite{mobley2018escaping}.
These neural architectures learn useful representations of atomic chemical environments from simple input features by updating and pooling embedding vectors via message passing iterations to neighboring atoms ~\cite{gilmer2017neural}.
As such, symmetries in chemical graphs (chemical equivalencies) are inherently preserved, while a rich, continuous, and differentiably learnable representation of the atomic environment is derived.
For a brief introduction to graph neural networks, see Appendix Section~\ref{sec:introduction-to-graph-nets}

Traditional molecular mechanics force field parameter assignment schemes such as Antechamber/GAFF~\cite{wang2004development,wang2006automatic} or CGenFF~\cite{vanommeslaeghe2012automation,vanommeslaeghe2012automation2} use attributes of atoms and their neighbors (such as atomic number, hybridization, and aromaticity) to assign discrete atom types. Subsequently, atom, bond, angle, and torsion parameters are assigned for specific combinations of these discrete types according to human chemical intuition~\cite{mobley2018escaping}.
On a closer look, this scheme resembles a two- or three-step Weisfeiler-Leman test~\cite{weisfeiler1968reduction}, which has been shown to be well approximated by some graph neural network architectures~\cite{xu2018powerful}.
We hypothesize that graph neural network architectures can be at least as expressive as legacy atom typing rules.
%While difficult to formally prove due to the irregular and diverse grammar used in legacy force fields, we offer empirical evidence of this. 

To compute continuous atom embeddings, we start with a molecular graph $\mathcal{G}$ whose atoms (nodes) are labeled with simple chemical properties (here, we consider element, hybridization, aromaticity, and formal charge, and membership in various ring sizes) easily computed in any cheminformatics toolkit.
Sequential application of the graph neural network message-passing update rules (Section~\ref{sec:introduction-to-graph-nets}) then computes an updated set of atom (node) features in each graph neural network layer, and the final atom embeddings $h_v \in \mathbb{R}^{\mid \mathcal{G} \mid \times D}$ are extracted from the final layer.
The loss on training data is then minimized by minimizing the cross-entropy loss between predicted and reference types.\footnote{
Note that this discrete type assignment layer is only used to address the question of how well the continuous embeddings approximate discrete types, and is not used in subsequent experiments that utilize the standard espaloma architecture (\FIG{overview}).}

We use a subset of ZINC~\cite{irwin2005zinc} provided with parm@Frosst to validate atom typing implementations~\cite{parm_frosst} (7529 small drug-like molecules, partitioned 80:10:10 into training:validation:test sets) for this experiment.
Reference GAFF 1.81~\cite{wang2006automatic} atom types are assigned using Antechamber~\cite{wang2006automatic} from AmberTools and are used for training and testing.
\begin{figure}[h]
    \centering
    \includegraphics[width=0.9\textwidth]{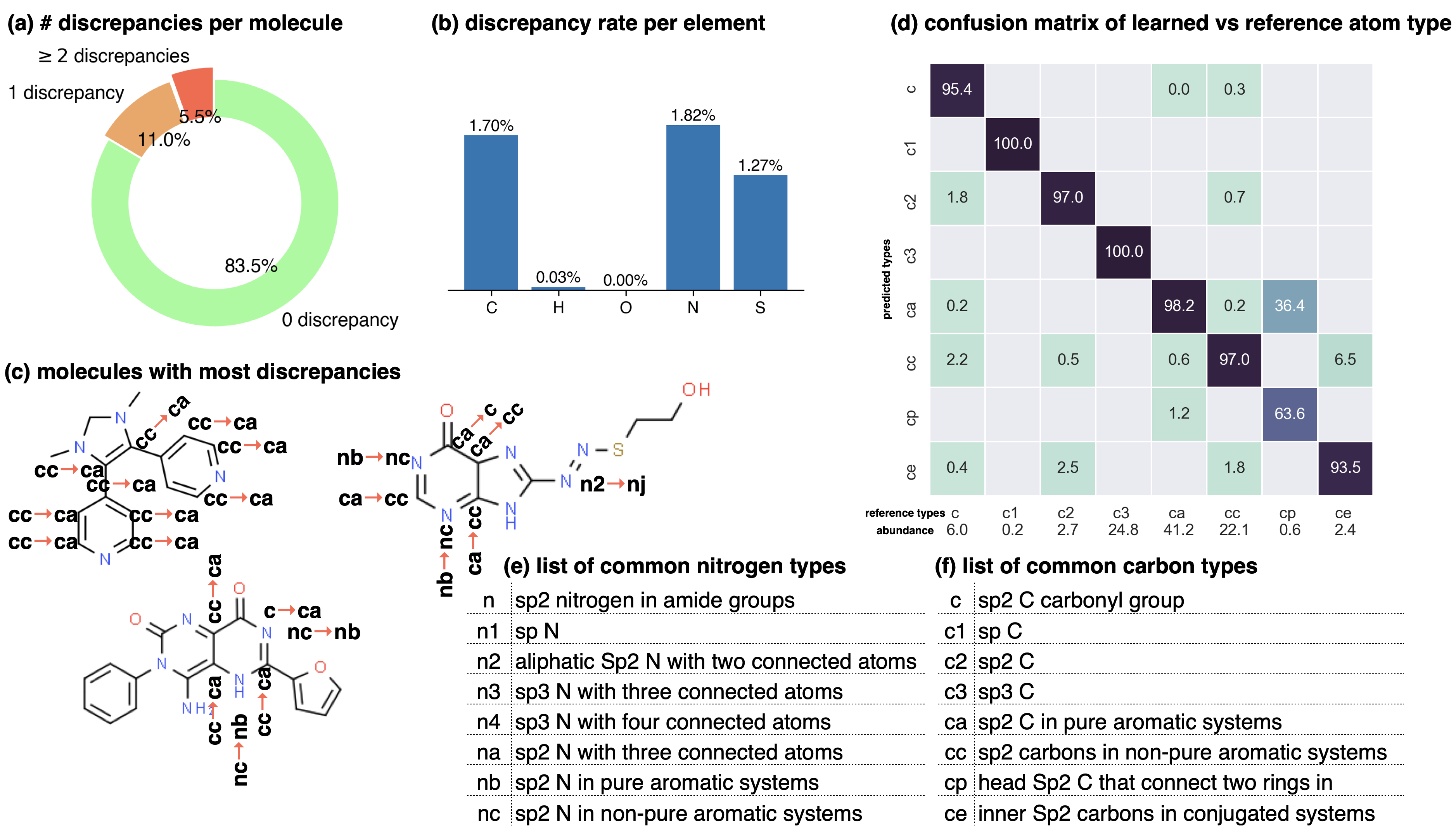}
    \caption{\textbf{Graph neural networks can reproduce legacy atom types with high accuracy.}
    The Stage 1 graph neural network of espaloma (Section~\ref{sec:stage1}) chained to a discrete atom type readout was fit to GAFF~1.81 atom types~\cite{wang2004development,wang2006automatic} on a subset of ZINC~\cite{irwin2005zinc} distributed with parm@Frosst~\cite{parm_frosst} as a validation set.
    The 7529 molecules in this set were partitioned 80:10:10 into training:validation:test sets for this experiment.
    Within test set, $99.07\%_{98.93\%}^{99.22\%}$ atoms were correctly typed, with 1000 bootstrap replicates used to estimate the confidence intervals arising from finite test set size effects.
    \emph{(a)} The distribution of the number of atom type discrepancies per molecule on the test set demonstrates that only a minority of atoms are incorrectly typed.
    \emph{(b)} The error rate per element is primarily concentrated within carbon, nitrogen, and sulfur types.
    \emph{(c)} Examining atom type failures in detail on molecules with the largest numbers of discrepancies shows that the atom types are easily confusing even to a human expert, since they represent qualities that are difficult to precisely define.
    \emph{(d)} The distribution of predicted atom types for each reference atom type for carbon types are shown; on-diagonal values indicate agreement. 
    The percentages annotated under x-axis denote the relative abundance within the test set.
    Only the common carbon types are included in the confusion matrix here; for full confusion matrix across all atom types, see SI Figure~\ref{extra-confusion}.
    (e) A list of common nitrogen types in GAFF-1.81~\cite{doi:10.1002/jcc.20035}.
    (f) A list of common carbon types in GAFF-1.81~\cite{doi:10.1002/jcc.20035}. 
    }
    \label{fig:atom-typing-accuracy}
\end{figure}

\paragraph{Graph neural networks can reproduce legacy atom types with high accuracy}

The test set performance is reported in Figure~\ref{fig:atom-typing-accuracy}, where the overall accuracy between reference legacy types and learned types is very high---$98.31\%_{97.94\%}^{98.63\%}$. 
In analyzing the infrequent failures, we find the model assigns atom types that correspond to the reference type more often when the atom type appears more frequently in the training data, whereas the discrepancies occur in assigning rare types and types whose definitions follow a more sophisticated (but chemically arbitrary) logic.
For instance, one of the most frequent confusions is the misassignment of {\tt ca} (sp2 carbon in pure aromatic systems) to {\tt cp} (head sp2 carbon that connects two rings in biphenyl systems, occurring in only 0.6\% of the dataset).
The relative ambiguity of the various types that are most frequently confused is suggestive that the graph net makes human-like errors in perceiving subtle differences between distinct chemical environments.

The benefits of neural embedding compared to legacy discrete typing are many-fold:
\begin{itemize}
    \item Legacy typing schemes are generally described in text form in published work (for example~\cite{wang2004development,wang2006automatic}), creating the potential for discrepancies between implementations when different cheminformatics toolkits are used.
    By contrast, with the \textit{knowledge} to distinguish chemical environments stored in latent vectors and not dependent on any manual coding, our approach is deterministic once trained, and is portable across platforms thanks to modern machine learning frameworks.
    \item Both the chemical perception process and the application of force field parameters $\Phi_\mathtt{FF}$ can be optimized simultaneously via gradient-based optimization of $\Phi_\text{NN}$ using standard machine learning frameworks that support automatic differentiation. 
    \item While extending a legacy force field by adding new atom types can lead to an explosion in the number of parameter types, 
    continuous neural embeddings do not suffer from this limitation; expansion of the typing process occurs automatically as more diverse training examples are introduced.
\end{itemize}

\subsection{\textbf{Stage 2}: Symmetry-preserving pooling generates continuous bond, angle, and torsion embeddings, replacing discrete types}
\label{sec:stage2}

Terms in a molecular mechanics potential are symmetric with respect to certain permutations of the atoms involved in the interaction.
For example, harmonic bond terms are symmetric with respect to the exchange of atoms involved in the bond.
More elaborate symmetries are frequently present, such as in the three-fold terms representing improper torsions for the Open Force Field "Parsley" generation of force fields ($\mathtt{k-i-j-l}$, $\mathtt{k-j-l-i}$, and $\mathtt{k-l-i-j}$, where $k$ is the central atom)~\cite{qiu_smith2021}.
Traditional force fields, for bond, angle, and proper torsion terms, enforce this by ensuring equivalent orderings of atom types receive the same parameter value.
\footnote{Traditional force fields group bonds, angles, and torsions simply by their composing ordered groups of atoms.
For instance, the first bond type in GAFF 1.81~\cite{wang2006automatic} is defined by the types {\tt hw-ow}\cite{gaff-dat}, which is equivalent to {\tt ow-hw} due to mirror symmetry in identifying bonds.
Angles and torsions have similar symmetries that must to be accounted for when enumerating the atoms or matching valence types.
Note that Amber does not uniquely specify equivariant improper torsion orderings---see footnote \emph{a} of Table 3 of \citep{mobley2018escaping} for details.
}

For neural embeddings, the invariances of valence terms w.r.t. these atom ordering symmetries must be considered while searching for expressive latent representations to feed into a subsequent parameter prediction network stage.
Inspired by Janossy pooling~\cite{DBLP:journals/corr/abs-1811-01900}, we enumerate the the relevant equivalent atom permutations to derive bond, angle, and torsion embeddings $h_r, h_\theta, h_\phi$ that respect these symmetries from atom embeddings $h_v$, 
\begin{gather}
    h_{r_{ij}} =  \operatorname{NN}_r ([h_{v_i}:h_{v_j}]) + \operatorname{NN}_r ([h_{v_j}:h_{v_i}]);\\
    h_{\theta_{ijk}} = \operatorname{NN}_\theta ([h_{v_i} : h_{v_j} : h_{v_k}]) + \operatorname{NN}_\theta ([h_{v_k} : h_{v_j} : h_{v_i}]);\\
    h_{\phi_{ijkl_\text{proper}}} = \operatorname{NN}_{\phi_\text{proper}} ([h_{v_i} : h_{v_j} : h_{v_k} : h_{v_l}]) + \operatorname{NN}_\phi ([h_{v_l} : h_{v_k} : h_{v_j} : h_{v_i}]);\\
    % [(0, 1, 2, 3), (2, 1, 3, 0), (3, 1, 0, 2)]
    h_{\phi_{ijkl_\text{improper}}} = \operatorname{NN}_{\phi_\text{improper}} ([h_{v_i} : h_{v_j} : h_{v_k} : h_{v_l}]) + \operatorname{NN}_\phi ([h_{v_k} : h_{v_j} : h_{v_l} : h_{v_i}]) + \operatorname{NN}_{\phi_\text{improper}} ([h_{v_l} : h_{v_j} : h_{v_i} : h_{v_k}]),
\end{gather}
where columns $(\cdot : \cdot)$ denote concatenation~\footnote{Here, we use the threefold improper formulation used by the Open Force Field "Parsley" generation force fields, which avoids the ambiguity associated with selecting a single arbitrary improper torsion from a set of four atoms involved in the torsion~\cite{qiu_smith2021}.
}. 
As such, the order invariance is evident, i.e., $h_{r_{ij}}=h_{r_{ji}}, h_{\theta_{ijk}}=h_{\theta_{kji}},$ and $h_{\phi_{ijkl}}=h_{\phi_{lkji}}$\footnote{In SI Section~\ref{sec:janossy-works}, we prove that this form is sufficiently expressive to assign unique valence types.}.
% Optionally, one could also incorporate embeddings of bonds, angles, and torsions directly into message-passing~\cite{Flam_Shepherd_2021}.
% We leave this to future study.

\subsection{\textbf{Stage 3}: Neural parametrization of atoms, bonds, angles, and torsions replaces tabulated parameter assignment}
\label{sec:stage3}
In the final stage, each feed-forward neural networks modularly learn the mapping from these symmetry-preserving atom, bond, angle, and torsion encodings to MM parameters $\Phi_\text{FF}$ that reflect the specific chemical environments appropriate for these terms:
% Each distinct parameter class (namely atom parameters, bond parameters, angle parameters, and torsion parameters) is assigned by a separate neural network, making this stage fully modular:
%$\{ \epsilon_v, \sigma_v,  k_r, b_r, k_r, b_r, k_\theta, b_\theta, k_\phi \} \in \Phi_\text{FF}$,
\begin{eqnarray}
 \{ \epsilon_v, \sigma_v\} = \operatorname{NN}_{v_\text{readout}}(h_v) & \:\: \textbf{atom parameters} \\
\{ k_r, b_r \} = \operatorname{NN}_{r_\text{readout}}(h_r) & \:\: \textbf{bond parameters} \\ 
\{ k_\theta, b_\theta \} = \operatorname{NN}_{\theta_\text{readout}}(h_\theta) & \:\: \textbf{angle parameters} \\ 
\{ k_\phi \} = \operatorname{NN}_{\phi_\text{readout}}(h_\phi) & \:\: \textbf{torsion parameters}
\end{eqnarray}
This stage is analogous to the final table lookup step in traditional force field construction, but with significant added flexibility arising from the continuous embedding that captures the chemical environment specific to the potential energy term being assigned.

Here, we use Lennard-Jones parameters from legacy force fields (here, Open Force Field 1.2.0~\cite{openff-1.2.0}) to avoid having to include condensed-phase physical properties in the fitting procedure.
While including condensed-phase physical properties in the loss function is possible, it is very expensive to do so, and as our experiments demonstrate, may not be necessary for achieving increased accuracy over legacy force fields.

We also found producing bond and angle parameters directly in Stage 3 to frustrate optimization, so we employ a mixture of linear bases to represent harmonic energies that can be translated back to the original functional form (see Appendix Section~\ref{sec: linear_basis}).
Similarly, we do not fit phases and periodicities of torsions as they are discrete. 
We instead fix phases at $\phi_0=0$ and fit all periodicities $n = 1, \ldots, 6$.
This allows the corresponding torsion barriers $K_n$ to assume the entire continuum of positive or negative values; as a result, $K_n<0$ mimics the effect of $\phi_0 = \pi$.

As a result of using the continuous atom embedding vectors to represent chemical environments for each atom, it is possible to intelligently interpolate between relevant chemical environments seen during training. 
This interpolation produces more nuanced varieties of parameters than either traditional atom typing or direct chemical perception, and is capable of capturing subtle effects arising from fractional bond order perturbation~\cite{stern2020capturing}.
Due to the modularity of this stage, it is easy to add new modules or swap out existing ones to explore other force field functional forms, such as alternative vdW interactions~\cite{halgren1992representation}; pair-specific Lennard-Jones interaction parameters~\cite{baker2010accurate,doi:10.1021/acs.jctc.2c00115}; point polarizabilities for instantaneous dipole~\cite{ren2002consistent}, Drude oscillator~\cite{lemkul2016empirical}, or Gaussian charge~\cite{leven2019c} polarizability models; class II valence couplings~\cite{maple1994derivation-1,hwang1994derivation-2,maple1994derivation-3}; charge transfer~\cite{ko2021fourth}; or other potential energy terms of interest.

%%%%%%%%%%%%%%%%%%%%%%%%%%%%%%%%%%%%%%%%%%%%%%%%%%%%%%%%%%%%%%%%%%%%%%%%%%%%%%%%
% FITTING MOLECULAR MECHANICS FORCE FIELDS
%%%%%%%%%%%%%%%%%%%%%%%%%%%%%%%%%%%%%%%%%%%%%%%%%%%%%%%%%%%%%%%%%%%%%%%%%%%%%%%%

\begin{figure}
    \centering
    \includegraphics[width=1.0\textwidth]{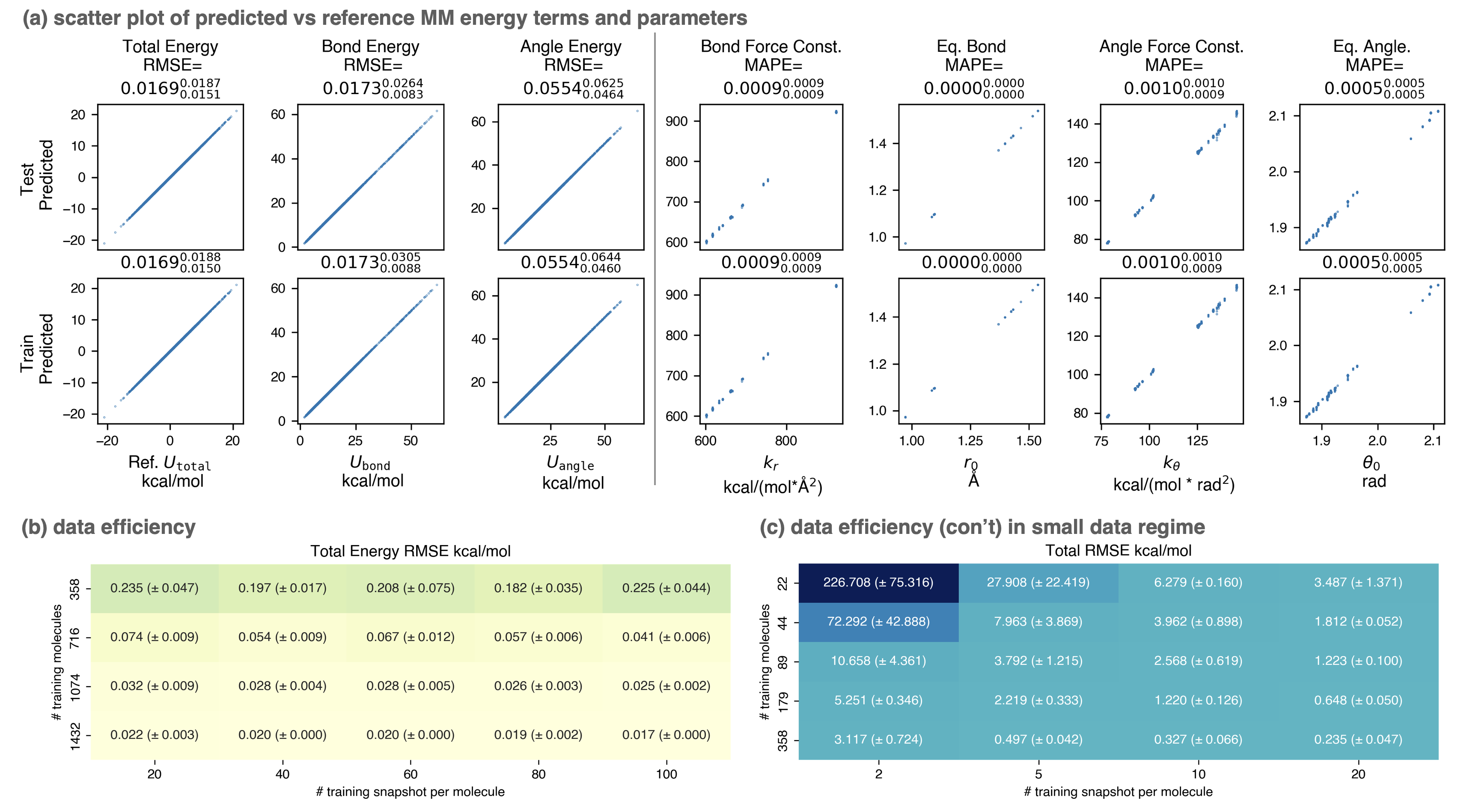}
    \caption{\textbf{Espaloma accurately learns molecular mechanics parameters when fit to snapshot energies from a molecular mechanics force field.}
    % \textbf{[JDC: Why no torsions? And is this for the training set or the test set? We can report the training set error as saying it can learn the MM types, and the test set as the generalization error.]}
    In this experiment, espaloma was used to fit molecular mechanics (GAFF~1.81) potential energies of snapshots generated from short molecular dynamics (MD) simulations initiated from multiple conformers of molecules from the PhAlkEthOH dataset, which uses only the elements carbon, oxygen, and hydrogen~\cite{PhAlkEthOH-1.0}.
    The dataset contains 7408 molecules with 100 snapshots each, and was partitioned by molecules 80:10:10 into train:validate:test sets.
    We excluded three- and four-membered rings in the dataset; for a detailed study on the roles of these two factors, see SI Section~\ref{sec:close-examination}.
    (a) Statistics quoted above the plots provide the root mean squared error (RMSE) between reference and predicted MM energies, and mean absolute percentage error (MAPE) (in fractional form) between reference and predicted force field parameters. 
    The sub- and superscripts report the the 95\% confidence interval of each statistics estimated from 1000 bootstrapped replicates over molecules in the test set.
    % Note that an additive offset is used for each molecule such that reference and predicted energies have mean zero over all conformers prior to comparison. 
    Energy terms have kcal/mol units whereas the units of force field parameters does not affect statistics reported here.
    Force field parameters ($\Phi_\text{FF}$):
    $K_r$: bond force constant;
    $b_r$: equilibrium bond length;
    $K_\theta$: angle force constant;
    $b_\theta$: equilibrium angle value.
    Torsion parameters are not shown because of the potential for degeneracy of fit given that all periodicities $n=1, 2, \ldots, 6$ are learned by espaloma.
    (b) The data efficiency of espaloma was assessed in a typical use-case regime by exploring the test set energy RMSE as a function of the number of training molecules and snapshots per molecule.
    The standard deviation over three fitting experiments with different random seeds is shown in parenthesis.
    Once a sufficiently large number of molecules are available, doubling the number of snapshots per molecule does not reduce the error as rapidly as doubling the number of molecules.
    (c) The data efficiency of espaloma in a data-poor regime was assessed in the same manner as (a), but for a small number of molecules and training snapshots per molecule.
    In the data-poor regime, increasing both the number of molecules and snapshots per molecule can deliver large decreases in test set error.
    }
    \label{fig:mm_fitting}
\end{figure}

\section{Espaloma can learn to mimic existing molecular mechanics force fields from snapshots and associated potential energies}
\label{sec:mm_fitting}

Having established that graph neural networks have the capacity to learn to reproduce legacy atom types describing distinct chemical environments, we ask whether espaloma is capable of learning to reproduce traditional molecular mechanics (MM) force fields assigned via standard atom typing schemes.
In addition to quantifying how well a force field can be learned when the exact parameters of the model being learned are known, being able to accurately learn existing MM force fields would have numerous applications, including replacing legacy non-portable parameter assignment codes with modern portable machine learning frameworks, learning to generalize to new molecules that contain familiar chemical environments, and permitting simplified parameter assignment for complex, heterogeneous systems involving post-translational modifications, covalent ligands, or heterogeneous combinations of biopolymers and small molecules.

To assess how well espaloma can learn to reproduce a molecular mechanics force field from a limited amount of data, we selected a dataset with limited chemical complexity---PhAlkEthOH~\cite{Bannan2019,PhAlkEthOH-1.0}---which consists of 7408 linear and cyclic molecules containing phenyl rings, small alkanes, ethers, and alcohols composed of only the elements carbon, oxygen, and hydrogen.
Three- and four-membered rings are excluded in the dataset since they would cause instability in the prediction of energies (see SI Section~\ref{sec:close-examination}).
We generated a set of 100 conformational snapshots for each molecule using short molecular dynamics simulations at 300~K initiated from multiple conformations to ensure adequate sampling of conformers.
The PhAlkEthOH dataset was randomly partitioned (by molecules) into 80\% training, 10\% validation, and 10\% test molecules, and an espaloma model was trained with early stopping via monitoring for a decrease in accuracy in the validation set.
The performance of the resulting model is shown in \FIG{mm_fitting}.

\paragraph{Espaloma can learn existing force fields and generalize to new molecules with low error}

Espaloma is able to achieve very low total energy and parameter error on the training set, suggesting that espaloma can learn the parameters of typed molecules from energies alone.
In addition, error on the out-of-sample test set of molecules is comparable---less than 0.02 kcal/mol---suggesting that espaloma can effectively generalize to new molecules within the same chemical space. 
Surprisingly, the total energy RMSE is lower than the angle energy RMSE, suggesting that there is some degeneracy in how energy contributions are distributed among valence energy terms.
% Note that, as judged by correlation coefficient ($R^2$) and mean absolute percentage error (MAPE), the equilibrium length and angle achieved a better performance compared to force constants.

\paragraph{Espaloma requires few conformations per molecule to achieve high accuracy}

We examined the data efficiency of espaloma by repeating the MM fitting experiment with varying numbers of molecules and snapshots per molecule in an attempt to address whether \emph{molecular diversity} or \emph{conformational diversity} is more important.
In the typical data regime (\FIG{mm_fitting}b), once sufficient conformational diversity is reached ($\sim$20 snapshots/molecule), increasing molecular diversity more effectively reduces error, though this meets with diminishing returns past a certain point.
In the low data regime (\FIG{mm_fitting}c), both molecular and conformational diversity are important for reducing error to useful regimes, with a minimal threshold for each required to achieve reasonable errors.

\section{Espaloma can fit quantum chemical energies directly to build new molecular mechanics force fields}
\label{sec:qm-fitting}

% \begin{table}[]
%     \centering
%     \begin{tabular}{c c c c c c}
%     \hline
%     & \multicolumn{3}{c}{espaloma} 
%     & \multicolumn{2}{c}{SchNet}\\
%     \hline
%     Training set size
%     & 1K
%     & 50K
%     & 1K Transfer
%     & 1K
%     & 50K\\
%     \hline
%     Training Time
%     & 0.2 h 
%     & 1.9 h
%     & 0.2 h 
%     & 2.5 h
%     & 13.5 h\\
%     Inference Time
%     & \multicolumn{3}{c}{0.005 ms}
%     & \multicolumn{2}{c}{11 ms} \\
%     \hline
% benzene & $2.91\pm 4.90$ & $1.70 \pm 2.48$ & $1.63 \pm 0.71$ & 1.19 & 0.08 \\
% toluene & $1.49 \pm 0.48$ & $1.89 \pm 1.42$ &$14.64 \pm 4.55$ & 2.95& 0.16\\
% malonaldehyde & $1.53 \pm 0.40$ & $1.44 \pm 0.23$ & $18.36 \pm 3.32$ & 2.03 & 0.13 \\
% salicylic acid & $2.38 \pm 0.49$ & $2.99 \pm 1.94$ & $8.23 \pm 2.18$ & 3.27 & 0.25 \\
% aspirin & $2.95 \pm 0.07$ & $2.89 \pm 0.37$ & $20.45 \pm 3.07$ & 4.20 & 0.25 \\
% ethanol & $1.26 \pm 0.03$ & $1.22 \pm 0.02$ & $26.50 \pm 8.35$ & 0.93 & 0.07 \\
% uracil & $3.52 \pm 2.74$ & $9.00 \pm 12.93$ & $28.71 \pm 3.75$ & 2.26 & 0.13 \\
% naphthalene & $2.43 \pm 1.40$ & $1.28 \pm 0.08$ & $10.46 \pm 3.32 $ & 3.58 & 0.20 \\
% \hline
%     \end{tabular}
%     \caption{\textbf{Single-molecule fitting experiment on MD17~\ref{Bogojeski2020}.} Test set RMSE on MD17 dataset. Statistics for SchNet is taken from \citet{schutt2018schnet, Sch_tt_2018}. For column 1K Transfer, training data is taken from all systems except the test molecule.}
%     \label{md17}
% \end{table}

\begin{figure}[tbp]
    \centering
    \resizebox{\textwidth}{!}{%
    \begin{tabular}{c c c c c c c c c c c}
    \hline
    
    & \multirow{2}{*}{{\bf (a)} dataset}
    & \multirow{2}{*}{\# mols}
    & \multirow{2}{*}{\# trajs}
    & \multirow{2}{*}{\# snapshots}
    & \multicolumn{2}{c}{Espaloma RMSE}
    & \multicolumn{4}{c}{Legacy FF RMSE (kcal/mol) (Test molecules)}\\
    
    & & & &
    & Train
    & Test
    & OpenFF~1.2.0
    & GAFF-1.81 
    & GAFF-2.11 
    & Amber~ff14SB\\
    \hline
    
    \multicolumn{2}{c}{\textbf{PhAlkEthOH} (simple CHO)}
    & 7408 & 12592 & 244036
    & $0.8656_{0.8225}^{0.9131}$
    & $1.1398_{1.0715}^{1.2332}$
    & $1.6071_{1.5197}^{1.6915}$
    & $1.7267_{1.6543}^{1.7935}$
    & $1.7406_{1.6679}^{1.8148}$\\
    
    \multicolumn{2}{c}{\textbf{OpenFF Gen2 Optimization} (druglike)}
    & 792 & 3977 & 23748
    & $0.7413_{0.6914}^{0.7920}$
    & $0.7600_{0.6644}^{0.8805}$
    & $2.1768_{2.0380}^{2.3388}$
    & $2.4274_{2.3300}^{2.5207}$
    & $2.5386_{2.4370}^{2.6640}$ \\
    
    \multicolumn{2}{c}{\textbf{VEHICLe} (heterocyclic)}
    & 24867 & 24867 & 234326
    & $0.4476_{0.4273}^{0.4690}$
    & $0.4233_{0.4053}^{0.4414}$
    & $8.0247_{7.8271}^{8.2456}$
    & $8.0077_{7.7647}^{8.2313}$
    & $9.4014_{9.2135}^{9.6434}$ \\
    
    \multicolumn{2}{c}{\textbf{PepConf} (peptides)}
    & 736 & 7560 & 22154
    & $1.2714_{1.1899}^{1.3616}$
    & $1.8727_{1.7309}^{1.9749}$
    & $3.6143_{3.4870}^{3.7288}$
    & $4.4446_{4.3386}^{4.5738}$
    & $4.3356_{4.1965}^{4.4641}$ 
    & $3.1502_{3.1117}^{3.1859, *}$\\
    \hline
    
    \multirow{2}{*}{\textbf{joint}}
    & OpenFF Gen2 Optimization
    & \multirow{2}{*}{1528}
    & \multirow{2}{*}{11537}
    & \multirow{2}{*}{45902}
    & $0.8264_{0.7682}^{0.9007}$
    & $1.8764_{1.7827}^{1.9947}$
    & $2.1768_{2.0380}^{2.3388}$
    & $2.4274_{2.3300}^{2.5207}$
    & $2.5386_{2.4370}^{2.6640}$ \\
    
    & PepConf
    & & &
    & $1.2038_{1.1178}^{1.3056}$
    & $1.7307_{1.6053}^{1.8439}$
    & $3.6143_{3.4870}^{3.7288}$
    & $4.4446_{4.3386}^{4.5738}$
    & $4.3356_{4.1965}^{4.4641}$ 
    & $3.1502_{3.1117}^{3.1859, *}$\\
    \hline
    
    \end{tabular}}
    \includegraphics[width=1.0\textwidth]{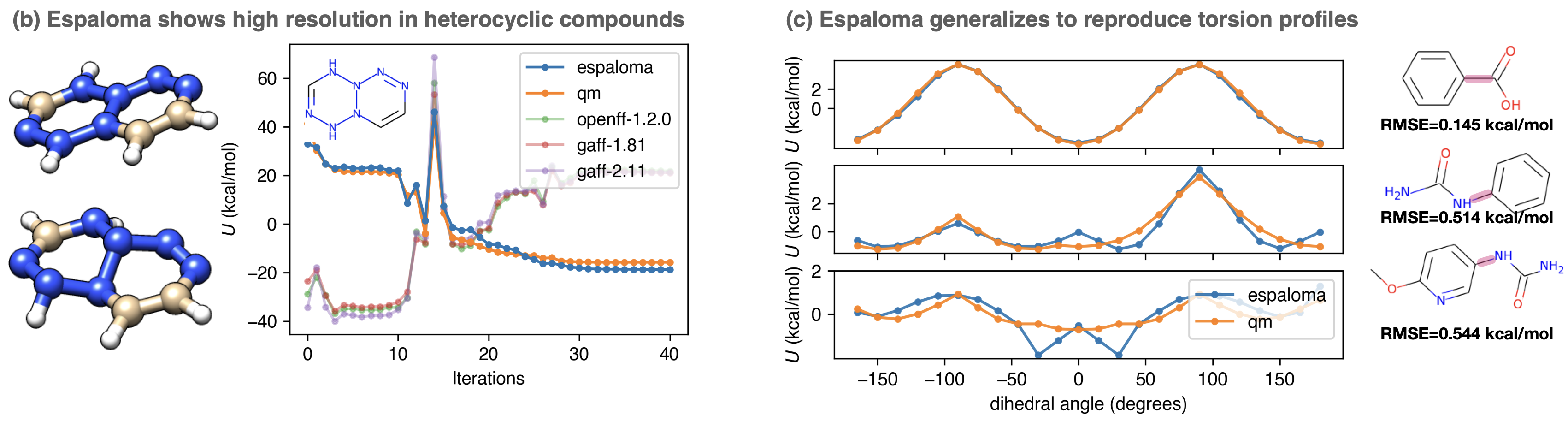}
    \caption{\textbf{Espaloma can directly fit quantum chemical energies to produce new molecular mechanics force fields with better accuracy than traditional force fields based on atom typing or direct chemical perception.}
    Espaloma was fit to quantum chemical potential energies for conformations generated by optimization trajectories initiated from distinct conformers in various datasets from QCArchive~\cite{wang2006automatic}.
    All datasets were partitioned by molecules 80:10:10 into train:validate:test sets.
    The number of molecules, optimization trajectories, and total snapshots are annotated in the table.
    (a) We report the RMSE on training and test sets, as well as the performance of legacy force fields on the test set.
    All statistics are computed with predicted and reference energies centered to have zero mean for each molecule, in order to focus on errors in relative conformational energetics rather than on errors in predicting the heats of formation of chemical species (which the MM functional form used here is incapable of).
    The 95\% confidence intervals annotated are calculated by via bootstrapping molecules with replacement using 1000 replicates.
    (b) Optimization trajectory of a representative (wih highest OpenFF 1.2.0 RMSE) heterocyclic compound in VEHICLe dataset with SMILES string \textsc{[H]C1=C(N2N(N=C(N(N2N=N1)[H])[H])[H])[H]}. 
    Legacy force fields, because of their limited chemical typing rules, were not able to perceive the chemical environment of the nitrogen atoms, which were not aromatic.
    (c) Espaloma is able to predict energies for quantum chemical torsion scans for an out-of-sample torsion scan dataset (the OpenFF Phenyl Torsion Drive Dataset~\cite{stern2020capturing, phenyl}, dihedral angle profiled marked in rouge.) to high accuracy even though it was not trained on torsion scans (only optimization trajectories) or any of the molecules in the torsion scan set.
    *: Six cyclic peptides that cannot be parametrized using OpenForceField toolkit engine~\cite{openff-toolkit-0.10.0} are not included.
    }
    \label{fig:qm-fitting}
\end{figure}

% \begin{figure}[!htbp]
% \centering
%     \includegraphics[width=0.6\textwidth]{}
%   \caption{\textbf{Espaloma can be directly optimized against QM energies to produce a new MM force field with comparable accuracy to traditional MM force field fitting schemes.} 
%   Overall test set RMSE: $2.8512_{2.0731}^{3.6788}$ kcal/mol, compared to $3.7365_{3.0209}^{4.4559}$ kcal/mol for OpenFF 1.2.0 Parsley force field~\cite{mobley2018escaping, jeff_wagner_2020_4021623}. 
%   Both reference and predicted energy are centered to have zero mean for each molecule.
%   Each molecule in the dataset is associated with multiple snapshots.
%   % subplot (a) -- scatter plot
%   In subplot (a), each dot represents a snapshot, color-coded by the molecule; the snapshots are aggregated to form the kernel density estimation plot.
%   % subplot (b) -- RMSE
%   In subplots (b) and (c), each vertical line corresponds to a \textit{molecule}, and the x-value of that line summarizes the quality of fit between Espaloma and QM reference.
%   In subplot (b), the quality of fit measure is Root-Mean-Squared Error (RMSE), computed after each record is centered to have zero mean.
%   \footnote{Centered because we are not interested in modeling per-molecule offsets.\ref{training_and_inference}}
%   % subplot (c) -- Pearson's r
%   In subplot (c), the quality of fit measure is Pearson's correlation coefficient.
%   }
%   \label{fig:qm_fitting}
%   \vspace{-0.3cm}
% \end{figure}

Since espaloma can derive a force field solely by fitting to energies (and optionally gradients), we repeat the end-to-end fitting experiment (Section~\ref{sec:mm_fitting}) directly using quantum chemical (QM) datasets used to build and evaluate MM force fields.
We assessed the ability of Espaloma to learn several distinct quantum chemical datasets generated by the Open Force Field Initiative~\cite{mobley2018open} and deposited in the MolSSI QCArchive~\cite{smith2020molssi} with B3LYP-D3BJ/DZVP level of theory:
\begin{itemize}
    \item \textbf{PhAlkEthOH}~\cite{PhAlkEthOH-1.0} is a collection of compounds containing only the elements carbon, hydrogen, and oxygen in compounds containing phenyl rings, alkanes, ketones, and alcohols.
    Limited in elemental and chemical diversity, this dataset is chosen as a proof-of-concept to demonstrate the capability of espaloma to fit and generalize quantum chemical energies when training data is sufficient to exhaustively cover the breadth of chemical environments.
    \item \textbf{OpenFF Gen2 Optimization}~\cite{gen2} consists of druglike molecules used in the parametrization of the Open Force Field 1.2.0 ("Parsley") small molecule force field~\cite{qiu2021development}. 
    This set was constructed by the Open Force Field Consortium from challenging molecule structures provided by Pfizer, Bayer, and Roche, along with diverse molecules selected from eMolecules to achieve useful coverage of chemical space. 
    \item \textbf{VEHICLe}~\cite{vehicle} or \textit{virtual exploratory heterocyclic library}, is a set of heteroaromatic ring systems of interest to drug discovery enumerated by \citet{doi:10.1021/jm801513z}. 
    The atoms in the molecules in this dataset have interesting chemical environments in heteroarmatic rings that present a challenge to traditional atom typing schemes, which cannot easily accommodate the nuanced distinctions in chemical environments that lead to perturbations in heterocycle structure.
    We use this dataset to illustrate that espaloma performs well in situations challenging to traditional force fields.
    \item \textbf{PepConf}~\cite{pepconf} from \citet{prasad_otero-de-la-roza_dilabio_2019} contains a variety of short peptides, including capped, cyclic, and disulfide-bonded peptides.
    This dataset---regenerated as an OptimizationDataset (quantum chemical optimization trajectories initiated from multiple conformers) using the Open Force Field QCSubmit tool~\cite{qcsubmit}---explores the applicability of espaloma to biopolymers, such as proteins.
\end{itemize}

% {\color{red}[JDC: We need to adequately describe the experiment we performed here. Which parameters did we fit, and which did we borrow from another force field? Which MM force field did we use for the rest of the parameters and charges?]}

Since nonbonded terms are generally optimized to fit other condensed-phase properties, we focused here on optimizing only the valence parameters (bond, angle, and proper and improper torsion) to fit these gas-phase quantum chemical datasets, fixing the non-bonded energies using a legacy force field~\cite{mobley2018open}.
In this experiment, all the non-bonded energies (Lennard-Jones and electrostatics) were computed using Open Force Field 1.2 Parsley~\cite{jeff_wagner_2020_4021623}, with AM1-BCC charges generated by the OpenEye Toolkit back-end for the Open Force Field toolkit 0.10.0~\cite{openff-toolkit-0.10.0}.
Because we are learning an MM force field that is incapable of reproducing quantum chemical heats of formation, which are reflected as an additive offset in the quantum chemical energy targets, snapshot energies for each molecule in both the training and test sets are shifted to have zero mean.
All datasets are randomly shuffled and split (by molecules) into training (80\%), validation (10\%), and test (10\%) sets.

\paragraph{Espaloma generalizes to new molecules better than widely-used traditional force fields}

To assess how well espaloma is able to generalize to new molecules, the performance for espaloma on test (and training) sets was compared to a legacy atom typing based force field (GAFF 1.81 and 2.11~\cite{wang2004development,wang2006automatic}, which collectively have been cited over 13,066 times) and a modern force field based on direct chemical perception~\cite{mobley2018escaping} (the Open Force Field 1.2.0 ("Parsley") small molecule force field~\cite{openff-1.2.0}, downloaded over 150,000 times).

The results of this experiment are reported in \FIG{qm-fitting}.
As can be readily seen by the reported test set root mean squared error (RMSE), espaloma can produce MM force fields with generalization performance consistently better than legacy force fields based on discrete atom typing (GAFF~\cite{wang2004development,wang2006automatic}).
In chemically well-represented datasets like PhAlkEthOH---which contains only simple molecules constructed from elements C, H, and O---espaloma is able to significantly improve on the accuracy of traditional force fields such as OpenFF~1.2.0, GAFF-1.81, and GAFF-2.11 on the test set.

Surprisingly, even though OpenFF 1.2.0 included the "Open FF Gen 2" dataset in training, espaloma is able to achieve superior \emph{test} set performance on this dataset, suggesting that both the flexibility and generalizability of continuous atom typing have significant advantages over even direct chemical perception~\cite{mobley2018escaping}.

Even compared to highly optimized late-generation \emph{protein} force fields such as Amber ff14SB~\cite{amber2020}---which was highly optimized to reproduce quantum chemical torsion drive data---espaloma achieves significantly higher accuracy, improving on Amber ff14SB error of $3.1502_{3.1117}^{3.1859, *}$ kcal/mol to achieve $1.8727_{1.7309}^{1.9749}$ kcal/mol on the PepConf peptide dataset~\cite{pepconf, prasad2019pepconf}.
This suggests that espaloma is capable of effectively parameterizing both small molecule and biopolymer force fields.
Indeed, when we train an espaloma model using \emph{both} the OpenFF Gen2 Optimization and PepConf datasets ({\bf Joint} in \FIG{qm-fitting}(a)), we see that a single espaloma model is capable of achieving superior accuracy to traditional small molecule \emph{and} protein force fields simultaneously.

\paragraph{Espaloma can automatically learn distinct atom environments overlooked by traditional force fields}

It is worth noting that that the traditional but widely used force fields considered here uniformly perform poorly on the VEHICLe dataset~\cite{doi:10.1021/jm801513z} ("Heteroaromatic Rings of the Future", containing heterocyclic scaffolds of interest to future drug discovery programs).
In \FIG{qm-fitting}(b), we show the most common mode of failure of legacy force fields by examining their predicted energy over the QM optimization trajectory of the compound with largest RMSE (with SMILES string \textsc{[H]C1=C(N2N(N=C(N(N2N=N1)[H])[H])[H])[H]}).
The initial conformation of the molecule, generated by OpenEye Toolkit, was planar.
As the conformation was optimized by quantum chemical methods, the tertiary nitrogens in the system become pyramidal.
In a closer examination, GAFF-2.11, for instance, assigned all carbons to be of type \textsc{cc} and all nitrogens \textsc{na}, indicating that they were perceived as aromatic, whereas there is no conjugated system present in the molecule.
This also reflects the limitation in \textit{resolution} of legacy force fields. 
Espaloma, on the other hand, provides a high-resolution atom embedding that can flexibly characterize the chemical environments, provided that similar environments existed in the training data.

\paragraph{Espaloma can reliably learn torsion profiles from optimization trajectories}

We wondered whether espaloma could faithfully recover torsion energy profiles---which are traditionally expensive to generate using methods like wavefront propagation~\cite{qiu2020driving}---from the inexpensive optimization trajectories used to train espaloma models. 
We therefore examined some representative dihedral energy profiles for molecules outside of the dataset used to train espaloma.
In \FIG{qm-fitting}(c), we use the espaloma model trained on OpenFF Gen2 Optimization and PepConf to predict the energy profiles of several torsion drive experiments in the OpenFF Phenyl Torsion Drive Dataset~\cite{phenyl}---which does not contain any of the molecules in the training set---and observed that the locations and heights of torsion energy barriers are recapitulated with reasonable accuracy.
This suggests that optimization trajectories are sufficient to capture the locations and relative heights of torsion barriers---a highly useful finding given the relative expense of generating accurate torsion profiles compared to simple optimization trajectories~\cite{qiu2020driving}.

\section{Espaloma can learn self-consistent charge models in an end-to-end differentiable manner}
\label{sec:charge-equilibration}

%Espaloma is not limited to learning valence parameters.
%While van der Waals parameters---such as the Lennard-Jones radius $\sigma$ and well depth $\epsilon$ parameters commonly used in MM force fields---can easily be generated by an Espaloma model, fitting these parameters usually requires expensive equilibrium simulations in the condensed phase~\cite{boulanger2018optimized,simonboothroyd_2020_openff_evaluator}.
%In this section, we use a scheme similar to that assigning bonded parameters to assign another set of crucial parameters used in molecular mechanics---partial atomic charges.

Historically, biopolymer force fields derive partial atomic charges via fits to high-level multiconformer quantum chemical electrostatic potentials on capped model compounds, adjusted to ensure the repeating biopolymer units have integral charge (often incorporating constraints to share identical backbone partial charges)~\cite{cornell1995second,duan2003point,best2012optimization}.
Some approaches to the derivation of partial atomic charges are enormously expensive, requiring iterative QM/MM simulations in explicit solvent to derive partial charges for new molecules~\cite{cerutti2013derivation,debiec2016further}.
For small molecules, state-of-the-art methods range from fast bond charge corrections applied to charges derived from semiempirical quantum chemical methods (such as AM1-BCC~\cite{jakalian2000fast,jakalian2002fast} or CGenFF charge increments~\cite{vanommeslaeghe2012automation}) to expensive multiconformer restrained electrostatic potential (RESP) fits to high-level quantum chemistry~\cite{bayly1993well,schauperl2020non}.
Surprisingly little attention has been paid to the divergence of methods used for assigning partial charges to small molecules and biopolymers, and the potential impact this inconsistency has on accuracy or ease of use---indeed, developing charges for post-translational modifications to biopolymer residues~\cite{khoury2013forcefield_ptm, atz_isert_bocker_jimenez-luna_schneider_2021} or covalent ligands can prove to be a significant technical challenge in attempting to bridge these two worlds.

While machine learning approaches have begun to find application in determining small molecule partial charges~\cite{bleiziffer2018machine,lubbers2018hierarchical,sifain2018discovering}, methods such as random forests are not fully continuously differentiable, rendering them unsuitable for a  fully end-to-end differentiable parameter assignment framework.
Recently, a fast (500x speed up for small molecules) approach has been proposed that uses graph neural networks as part of a charge-equilibration~\cite{rappe1991charge,doi:10.1021/ja00275a013} scheme (inspired by the earlier VCharge model~\cite{gilson2003fast}) to self-consistently assign partial charges to small molecules, biopolymers, and arbitrarily complex hybrid molecules in a conformation-independent manner that only makes use of molecular topology~\cite{wang2019graph}.
Perhaps unsurprisingly, due to the requirement that molecules retain their integral net charge, directly predicting partial atomic charges from latent atom embeddings and subsequently renormalizing charges leads to poor performance (0.28 e~\cite{wang2019graph}).

\begin{table}[tb]
    \centering
    % \begin{tabular}{c c c c c}
    %     \hline
    %     & \multicolumn{2}{c}{charge RMSE (e)}
    %     & \multicolumn{2}{c}{energy RMSE (kcal/mol)} \\
    %     \hline
        
    %     & training
    %     & test
    %     & training
    %     & test \\
    %     \hline
        
    %     charge model only
    %     & $0.0137_{0.0131}^{0.0144}$
    %     & $0.0308_{0.0269}^{0.0369}$
    %     & 
    %     & \\
        
    %     energy model only$^*$
    %     &
    %     &
    %     & $0.8143_{0.7787}^{0.8595}$
    %     & $0.8311_{0.7538}^{0.9108}$
    %     \\
    %     joint valence energy + charge model 
    %     & $0.0134_{0.0127}^{0.0140}$
    %     & $0.0291_{0.0253}^{0.0332}$ 
    %     & $0.6468_{0.5849}^{0.7196}$
    %     & $1.9619_{1.8816}^{2.0558}$
    %     \\
    %     joint valence + Coulomb energy + charge model 
    %     & $0.0365_{0.0333}^{0.0396}$
    %     & $0.0368_{0.0328}^{0.0412}$
    %     & $0.6863_{0.6574}^{0.7218}$
    %     & $2.8303_{2.7168}^{2.9629}$ \\
    %     \hline
    % \end{tabular}
    \resizebox{\textwidth}{!}{%
    \begin{tabular}{c c | c c c c }
    \hline
        \multicolumn{2}{c}{\bf PhAlkEthOH experiment (combination of independent models)} &
        \multicolumn{2}{c}{\bf energy RMSE (kcal/mol)} &
        \multicolumn{2}{c}{\bf charge RMSE (e)} \\
        valence force field &
        charge model & 
        Train & Test & Train & Test \\
        \hline 
        openff-1.2.0 & AM1-BCC &  \multicolumn{2}{c}{$1.6071_{1.5197}^{1.6915}$} & \multicolumn{2}{c}{reference} \\
        openff-1.2.0 & espaloma & $1.6286_{1.5861}^{1.6628}$ & $1.7072_{1.6361}^{1.7913}$ & $0.0072_{0.0071}^{0.0073}$ & $0.0072_{0.0070}^{0.0073}$\\
        espaloma & AM1-BCC & $0.8656_{0.8225}^{0.9131}$ & $1.1398_{1.0715}^{1.2332}$ &
        \multicolumn{2}{c}{reference}\\
        espaloma & espaloma &
        $0.9596_{0.9100}^{1.0101}$ &
        $1.2216_{1.1529}^{1.2766}$ &
        $0.0072_{0.0071}^{0.0073}$ &
        $0.0072_{0.0070}^{0.0073}$
         \\
        \hline
        %\multicolumn{2}{c}{Joint espaloma Model Trained on Valence Energy \& Charge} 
        \multirow{3}{*}{Joint Model, Loss=}

        & 
        $\operatorname{MSE}(
        U_\mathtt{qm}(\mathbf{x}), \hat{U}_\mathtt{valence}(\mathbf{x}; \Phi_\mathtt{NN})
        + \hat{U}_\mathtt{charge}(\mathbf{x}; \Phi_\mathtt{NN})
        + U_\mathtt{LJ}(\mathbf{x})
        )
        $
        & $0.8646_{0.8186}^{0.9180}$ 
        & $1.0839_{1.0462}^{1.1228}$
        & $0.3218_{0.3173}^{0.3258}$
        & $0.3230_{0.3184}^{0.3290}$ \\
        
        % \multicolumn{2}{c}{Joint espaloma Model Trained on Valence + Coulomb Energy}
        & 
        $\operatorname{MSE}(
        U_\mathtt{qm}(\mathbf{x}), \hat{U}_\mathtt{valence}(\mathbf{x}; \Phi_\mathtt{NN})
        + U_\mathtt{AM1-BCC}(\mathbf{x})
        + U_\mathtt{LJ}(\mathbf{x})
        )
        + \operatorname{MSE}(q, \hat{q}(\Phi_\mathbf{NN}))
        $
        & $0.8336_{0.7882}^{0.8877}$ & $1.0987_{1.0318}^{1.1697}$ & $0.0075_{0.0073}^{0.0077}$ & $0.0076_{0.0075}^{0.0077}$\\
        
        % \multicolumn{2}{c}{Joint espaloma Model Trained on Valence + Coulomb Energy \& Charge}
        & 
        $\operatorname{MSE}(
        U_\mathtt{qm}(\mathbf{x}), \hat{U}_\mathtt{valence}(\mathbf{x}; \Phi_\mathtt{NN})
        + \hat{U}_\mathtt{charge}(\mathbf{x}; \Phi_\mathtt{NN})
        + U_\mathtt{LJ}(\mathbf{x})
        )
        + \operatorname{MSE}(q, \hat{q}(\Phi_\mathbf{NN}))
        $
        & $0.9912_{0.9531}^{1.0443}$
        & $1.2921_{1.2403}^{1.3323}$
        & $0.0138_{0.0136}^{0.0140}$
        & $0.0137_{0.0135}^{0.0139}$
        \\

        \hline

    \end{tabular}}
    \caption{
    \textbf{Espaloma can jointly predict partial charges through a simple charge-equilibration model that learns electronegativity and hardness parameters.}
    Using the OpenFF PhAlkEthOH dataset (244,036 snapshots over 7,408 molecules containing only carbon, oxygen, and hydrogen atoms), we compare the performance of indepdently trained and jointly trained espaloma models in reproducing snapshot potential energies and AM1-BCC partial charges.
    In the first two rows, we train two independent models to predict the energy of conformations and charges of atoms;
    in the third row, we use a joint model where the latent embedding is shared between these two tasks.
    % We observe that the joint model outperforms the independent model in both tasks.
    As a reference, the RMSE between AM1-BCC charges assigned by two different chemoinformatics toolkits---Ambertools 21~\cite{wang2004development} and OpenEye Toolkit---are $0.0126_{0.0124}^{0.0129}$ (e).
    }
    \label{tab:joint_charge}
\end{table}

\paragraph{A simple charge-equilibration model can use learned physical parameters}

Instead, predicting the parameters of a simple physical topological charge-equilibration model~\cite{rappe1991charge,gilson2003fast} can produce geometry-independent partial charges capable of reproducing charges derived from quantum chemical electrostatic potential fits~\cite{wang2019graph}.
Note that, unlike \citet{wang2019graph}, here we fit AM1-BCC charges rather than higher level of quantum mechanics theory due to their high cost.
Specifically, we use our atom latent representation to instead predict the first- and second-order derivatives of a pseudopotential energy $E$ with respect to the partial atomic charge $q_i$ on atom $i$:
\begin{equation}
        e_i \equiv \frac{\partial E}{\partial q_i},
        s_i  \equiv \frac{\partial E^2}{\partial^2 q_i}.
    \end{equation} 
Here, the \emph{electronegativity} $e_i$ quantifies the desire for an atom to take up negative charge, while the \emph{hardness} $s_i$ quantifies the resistance to gaining or losing too much charge.
A module is added to Stage III of espaloma to predict the chemical environment adapted $(e_i, s_i)$ parameters for each atom from the latent atom embeddings.
    
The partial charges for all atoms can then be obtained by minimizing the second-order Taylor expansion of the potential pseudoenergy contributed by atomic charges:  
\begin{equation}\label{q}
    \{ \hat{q}_i\} = \underset{q_i}{\mathrm{argmin}}\sum_i \hat{e}_i q_i + \frac{1}{2}\hat{s}_i q_i^2,
\end{equation}
subject to
\begin{equation}\label{constraint}
\sum_i \hat{q}_i = \sum_i q_i = Q,
\end{equation}
where $Q$ is the total (net) charge of the molecule. 

Using Lagrange multipliers, the solution to \ref{q} can be given analytically by:
\begin{equation}
    \hat{q}_i = - e_i s_i^{-1} + s_i^{-1}\frac{Q + \sum_i e_i s_i^{-1}}{\sum_j s_j ^{-1}},
\end{equation}whose Jacobian and Hessian are trivially easy to calculate. As a result, the prediction of $\{\hat{e}_i, \hat{s}_i\}$ could be optimized end-to-end using backpropagation.

\paragraph{A learned charge model predicts AM1-BCC charges to better accuracy than the difference between different implementations of AM1-BCC}

When predicting the partial charges independently, we observe that the RMSE error on the test set ($0.0072_{0.0071}^{0.0073}$ e), is smaller than the difference between the discrepancy between AM1-BCC charges assigned by two popular chemoinformatics toolkits, Ambertools 21~\cite{amber2020} and OpenEye Toolkit ($0.0126_{0.0124}^{0.0129}$).
To translate this error into energy scale, we pair this charge module, which we termed espaloma charge in Table~\ref{tab:joint_charge}, with either openff-1.2.0 or espaloma valence force field and observe that there is only a slight decrease in the total energy performance (within confidence interval).

\paragraph{Espaloma can generate fast and accurate partial charges and valence parameters simultaneously}

Next, we integrated this approach into an espaloma model where parameters of the charge equilibration model $\{e_i, s_i\}$ and the bonded (bond, angle, and torsion) parameters are optimized jointly.
The resulting model is trained by augmenting the loss function to include a term that penalizes the deviation from AM1-BCC partial charges for the molecules in the training set.
On the one hand, one can calculate the Coulomb energy term using this predicted set of charges and incorporate this directly into the energy MSE loss function (shown in the first row in the second half of Table~\ref{tab:joint_charge}).
This approach, although it maintains a relatively accurate energy prediction, leads to a large charge RMSE, since no reference charge is provided.
On the other hand, we can penalize the derivation from the reference charges by adding an MSE loss on the charges with a tunable weight as a hyperparameter (second row), which we tune on the validation set to be $1e-3$.
This setting results in satisfactory performance in both energy and charge prediction.
Finally, if we combine both losses (third row), we observe worse performance on test set energy predictions, which could be attributed to the repeated strong regularization on charge parameters.

\section{Espaloma can parameterize biopolymers}
\label{sec:biopolymers}
We have insofar established that espaloma, as a method to construct MM force fields, shows great versatility and flexibility.
In the following sections, we showcase its utility with a model predicting both valence parameters and partial charges trained on OpenFF Gen2 Optimization Dataset as well as PepConf dataset, which we released as `espaloma-0.2.2` with the package.

\begin{figure}[tbp]
    \centering
    \includegraphics[width=\textwidth]{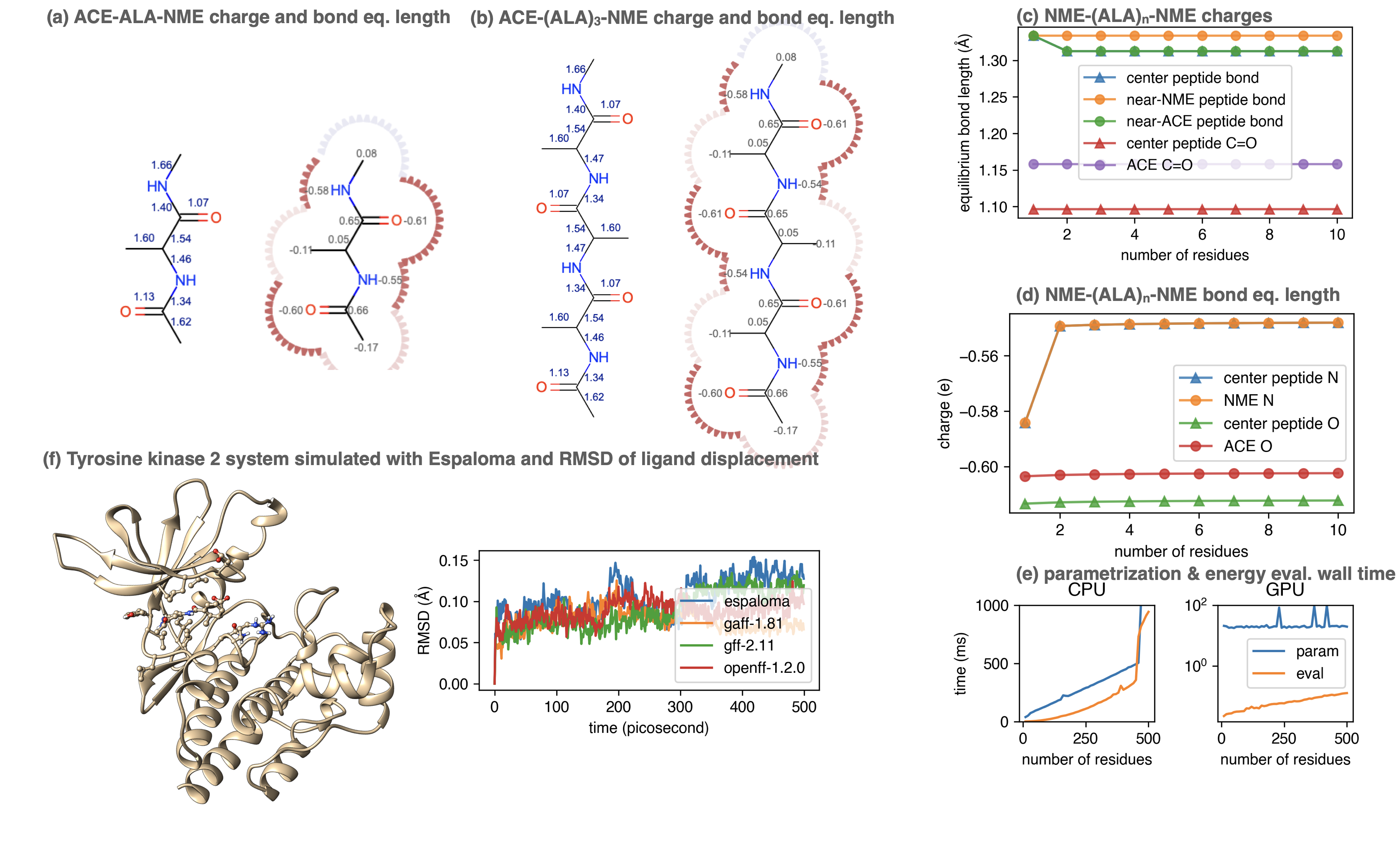}
    \caption{
    \textbf{Espaloma can be used to generate self-consistent force fields for biopolymers and small molecules.}
    %We show representative parameters and computation wall time for ACE-(ALA)$_n$-NME system parametrized by Espaloma.
    (a) Espaloma-assigned equilibrium heavy atom bond lengths (left, in angstrom) and heavy atom partial charges (right, in elemental charge unit) for ACE-(ALA)$_1$-NME.
    (b) Espaloma-assigned equilibrium heavy atom bond lengths (left) and heavy atom partial charges (right) for ACE-(ALA)$_3$-NME, which shows minimal deviation from ACE-(ALA)$_1$-NME and consistent parameters for amino acid residues.
    (c) Partial atomic charges of selected atoms shown for ACE-(ALA)$_n$-NME for $n=1,...,10$.
    (d) Equilibrium bond lengths of selected bonds for ACE-(ALA)$_n$-NME for $n=1,...,10$.
    (e) CPU (left) and GPU (right) parameter assignment (blue curve) and energy evaluation on OpenMM~7.5.1~\cite{eastman2012accelerating} (orange curve) wall times.
    (f) Espaloma simultaneously parametrizes marco- and small molecule in protein-ligand system.
      Left: Tyrosine kinase 2 system parametrized by espaloma and miminized and equilibrated with TIP3P water model~\cite{wang2014building} and counterions.
      Right: Root mean-squared displacement (RMSD) of ligand w.r.t. the initial position, in systems parametrized by espaloma and traditional force fields.
    %Because of the drastic difference in their magnitudes, these wall times are plotted separately and a linear y-axis scale is used for CPU subplot and a log scale y-axis for GPU subplot.
    }
    \label{fig:my_label}
    \label{fig:alan}
\end{figure}

The speed and flexibility of graph convolutional networks allows espaloma to parameterize even very large biopolymers, treating them as (large) small molecules in a graph-theoretical manner.
While graph neural networks perceive nonlocal aspects of the chemical environment around each atom, the limited number of rounds of message passing ensures stability of the resulting parameters when parameterizing systems that consist of repeating residues, like proteins and nucleic acids.

To demonstrate this, we considered the simple polypeptide system ACE-ALA$_n$-NME, consisting of $n$ alanine residues terminally capped by acetyl- and N-methyl amide capping groups.
Using the joint charge and valence term espaloma model, we assigned parameters to ACE-ALA$_n$-NME systems with $n=1,2,...,500$, showing illustrative parameters in Figure~\ref{fig:alan}.
Espaloma stably assigns parameters to the interior residues of peptides even as they increase in length, with parameters of the central residue unchanged after $n > 3$.
This pleasantly resembles the behavior of traditional residue template based protein force fields, even though no templates are used within espaloma's parameter assignment process.

\paragraph{Espaloma can generate self-consistent valence parameters and partial charges for large biopolymers in less than a second} 

Despite its use of a sophisticated graph net machine learning model, the wall time required to parameterize large proteins scales linearly with respect to the number of residues (and hence system size) on a CPU (Figure~\ref{fig:alan}, lower right).
On a GPU, the wall clock time needed to parameterize systems of this size stays roughly constant (due to overhead in executing models on the GPU) at less than 100 microseconds.
Since espaloma applies standard molecular mechanics force fields, the energy evaluation times for an Espaloma-generated force field are identical to traditional force fields.

%%%%%%%%%%%%%%%%%%%%%%%%%%%%%%%%%%%%%%%%%%%%%%%%%%%%%%%%%%%%%%%%%%%%%%%%%%%%%%%%
% SELF CONSISTENT PARAMETERS
%%%%%%%%%%%%%%%%%%%%%%%%%%%%%%%%%%%%%%%%%%%%%%%%%%%%%%%%%%%%%%%%%%%%%%%%%%%%%%%%

\section{Espaloma can produce self-consistent biopolymer and small molecule force fields that result in stable simulations}
\label{sec:protein-ligand}

% \begin{figure}[tbp]
% \centering
%     \includegraphics[width=0.48\textwidth]{}
%   \caption{
%   \textbf{Espaloma simultaneously parametrizes marco- and small molecule in protein-ligand system.}
%   Top: Tyrosine kinase 2 system parametrized by Espaloma and miminized and equilibrated with TIP3P water model~\cite{wang2014building} and counterions.
%   Bottom: Root mean-squared displacement (RMSD) of ligand w.r.t. the initial position, in systems parametrized by Espaloma and traditional force fields.
% }
% \label{fig:kinase-inhibitor}
% \end{figure}

\begin{figure}
    \centering
    \includegraphics[width=\textwidth]{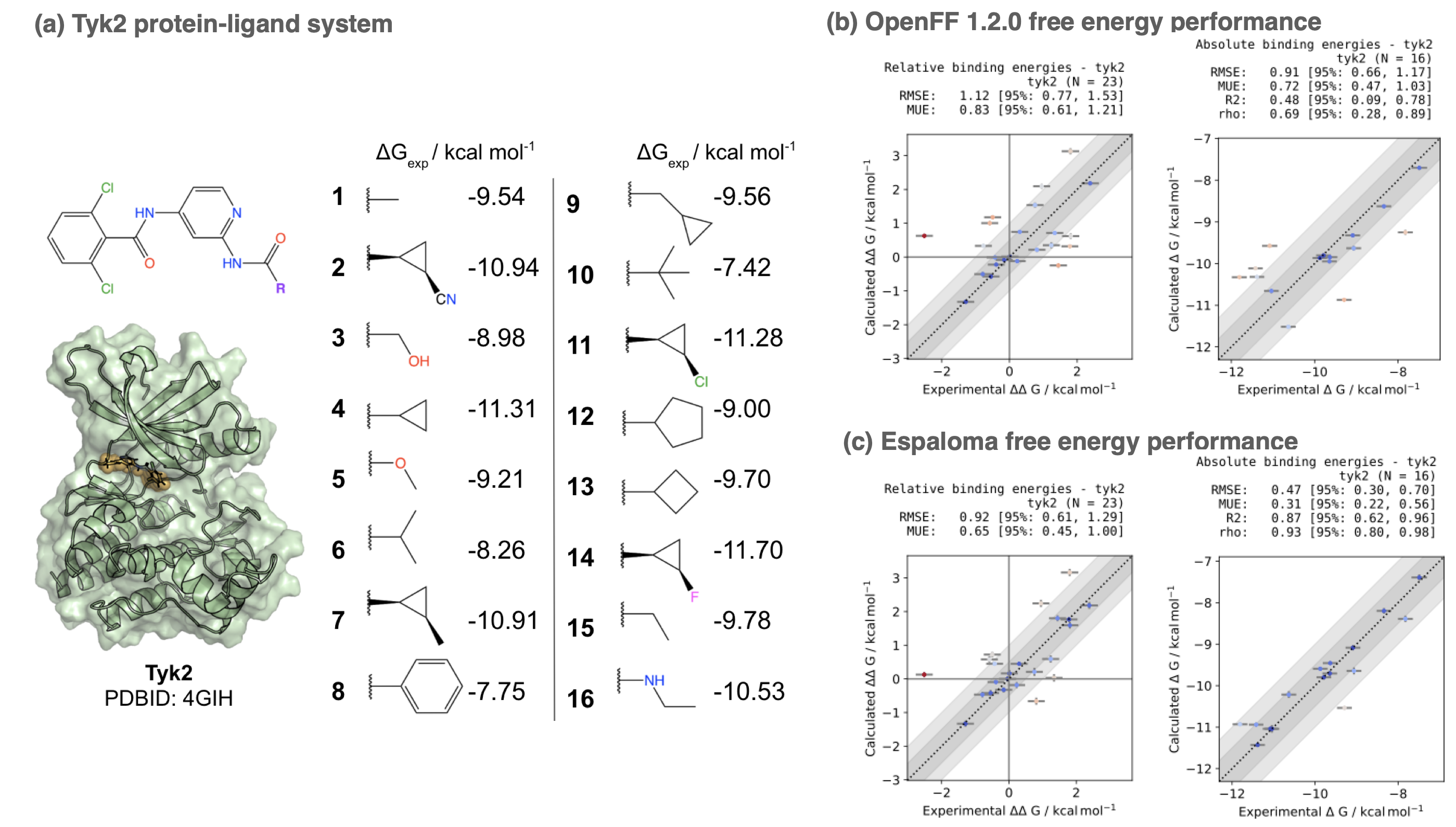}
    \caption{
    {\bf Espaloma small molecule parameters can be used for accurate protein-ligand alchemical free energy calculations.}
    (a) The Tyk2 congeneric ligand benchmark series taken from the Schr\"odinger JACS benchmark set~\cite{wang2015accurate} is challenging for both commercial force fields (OPLS2.1 achieves a $\Delta$G RMSE of 0.93±0.12~kcal~mol$^{-1}$~\cite{wang2015accurate}) and public force fields (GAFF~1.8 achieves a $\Delta$G RMSE of 1.13~kcal~mol$^{-1}$, and $\Delta \Delta$G RMSE of 1.27~kcal~mol$^{-1}$~\cite{song2019using}). 
    We show the X-ray structure used for all free energy calculations as well as 2D structures of all ligands in the benchmark set, along with their experimental binding free energies.
    This congeneric series from~\cite{wang2015accurate} was selected from \cite{liang2013lead} where experimental errors in $K_i$ are reported to have $\delta K_i / K_i < 0.3$, yielding $\delta \Delta G \approx$~0.18~kcal~mol$^{-1}$ and $\delta \Delta \Delta G \approx$~0.25~kcal~mol$^{-1}$.
    Here, we used the perses~\cite{perses} relative free energy calculation tool, based on OpenMM~\cite{eastman2017openmm}, to assess the accuracy of espaloma on this dataset.
    (b) The Open Force Field ("Parsley") openff-1.2.0 small molecule force field achieves an absolute free energy ($\Delta G$) RMSE of 0.91 [95\% CI: 0.66, 1.17] kcal/mol on this set.
    (c) The espaloma-0.2.2 model for predicting valence parameters and partial charges---trained jointly on the same OpenFF Gen2 Optimization dataset used for openff-1.2.0 as well as the PepConf dataset to reproduce quantum chemical energies and AM1-BCC charges---achieves a lower error of 0.48 [95\% CI: 0.30, 0.73] on this set, despite having never been trained on any molecules in this set.
    }
    \label{fig:tyk2-free-energy}
\end{figure}

Traditionally, in a protein-ligand system, separate (but hopefully compatible) force fields and charge models have been assigned to small molecules (which are treated as independent entities parameterized holistically) and proteins (which are treated as collections of templated residues parameterized piecemeal)~\cite{amber2020}.
This practice both has the potential to allow significant inconsistencies while also introducing significant complexity in parameterizing heterogeneous systems.

Using the joint espaloma model trained on both the "OpenFF Gen 2 Optimization" small molecule and "PepConf" peptide quantum chemical datasets (Section~\ref{sec:qm-fitting})~\cite{qiu2020development, prasad2019pepconf}), we can apply a consistent set of parameters to both protein and small molecule components of a kinase:inhibitor system.
Figure~\ref{fig:alan}(f) shows the ligand heavy-atom RMSD after aligning on protein heavy atoms for 0.5~ns trajectories of the Tyk2:inhibitor system from the Alchemical Best Practices Benchmark Set 1.0~\cite{mey2020best}.
It is readily apparent that the espaloma-derived parameters lead to trajectories that are comparably stable to simulations that utilize the Amber ff14SB protein force field~\cite{maier2015ff14sb} with GAFF~1.81, GAFF~2.11~\cite{wang2004development,wang2006automatic}, or OpenFF~1.2.0~\cite{openff-1.2.0} small molecule force fields.
All systems are explicitly solvated with a 9~\AA  buffer around the protein with TIP3P water~\cite{jorgensen1983comparison} and use the Joung and Cheatham monovalent counterion parameters~\cite{joung2008determination} to model a neutral system with 300~mM NaCl salt.

Additionally, the espaloma model also provides sufficient coverage to model more complex and heterogeneous protein-ligand convalent conjugates, which was highly non trivial in traditional force fields where protein and ligand are parametrized separately.
We provide a detailed study of this capability in Appendix Section~\ref{sec:covalent-ligands}.

%%%%%%%%%%%%%%%%%%%%%%%%%%%%%%%%%%%%%%%%%%%%%%%%%%%%%%%%%%%%%%%%%%%%%%%%%%%%%%%%
% FREE ENERGY CALCULATIONS
%%%%%%%%%%%%%%%%%%%%%%%%%%%%%%%%%%%%%%%%%%%%%%%%%%%%%%%%%%%%%%%%%%%%%%%%%%%%%%%%

\section{Espaloma small molecule parameters and charges provide accuracy improvements in alchemical freeenergy calculations}

To assess whether the small molecule parameters and charges generated by espaloma achieve competitive performance to traditional force fields, we used the perses 0.9.5 relative alchemical free energy calculation infrastructure~\cite{perses} (based on OpenMM 7.7~\cite{eastman2017openmm} and openmmtools 0.21.2~\cite{Chodera:openmmtools:2021}) to compare performance on the Tyk2 kinase:inhibitor benchmark set from the Schrodinger JACS benchmark set~\cite{wang2015accurate} as curated by the OpenFF protein-ligand benchmark 0.2.0~\cite{protein-ligand-benchmark}.
In order to assess the impact of espaloma small molecule parameters and charges in isolation, we used the Amber ff14SB protein force field~\cite{maier2015ff14sb}, and performed simulations with either OpenFF~1.2.0 (\textsc{openff-1.2.0}) or the espaloma Joint model trained on OpenFF Gen2 Optimization and PepConf datasets (\textsc{espaloma-0.2.2}) available through the openmmforcefields 0.11.0 package~\cite{openmmforcefields}.
Notably, none of the ligands appearing in this set appear in the training set for either force field.
All systems were explicitly solvated with a 9~\AA  buffer around the protein with TIP3P water~\cite{jorgensen1983comparison} and use the Joung and Cheatham monovalent counterion parameters~\cite{joung2008determination} to model a neutral system with 300~mM NaCl salt.
The same transformation network provided in the OpenFF protein-ligand benchmark set was used to compute alchemical transformations, and absolute free energies up to an additive constant were estimated from a least-squares estimation strategy~\cite{xu2019optimal} as implemented in the OpenFF arsenic package~\cite{arsenic}.
Both experimental and calculated absolute free energies were shifted to their respective means before computing statistics, as in \citep{wang2015accurate}.

\FIG{tyk2-free-energy} shows a comparison of both relative ($\Delta \Delta G$) and absolute ($\Delta G$) free energy error statistics.
While the OpenFF~1.2.0 force field achieves an impressive RMSE of 0.91$_{0.66}^{1.17}$~kcal/mol, using espaloma valence and charge parameters improves the accuracy to 0.47$_{0.30}^{0.70}$~kcal/mol.
Additionally, the Spearman $\rho$ correlation coefficient improves from 0.69$_{0.28}^{0.89}$ (OpenFF~1.2.0) to 0.93$_{0.80}^{0.98}$ (espaloma-0.2.2).
While more extensive benchmarking is necessary to esablish the generality of these improvements, this represents a first demonstration that performance can be on par with, if not superior to, traditionally constructed force fields.

\section{Discussion}
\label{sec:discussion}

Here, we have demonstrated that graph neural networks not only have the capacity to reproduce legacy atom type classification, but they are sufficiently expressive to fit a traditional molecular mechanics force field and generalize it to new molecules, as well as learn entirely new force fields directly from quantum chemical energies and experimental measurements.
The neural framework presented here also affords the modularity to easily experiment with the inclusion of additional potential energy terms, functional forms, or parameter classes, while making it easy to rapidly refit the entire force field afterwards.

%
% What applications does this enable?
% * Replacing aging infrastructure (e.g. crufty Fortran or C code) for legacy parameter assignment systems (e.g. GAFF)
% * Generalizing hand-curated force field parameters (e.g. CHARMM, AMOEBA, OPLS2?)
% * Training new force fields from QM data
% 
\paragraph{Espaloma enables a wide variety of applications}

Espaloma enables a wide variety of applications in the realm of molecular simulation:
While many force field packages use complex, difficult to maintain, non-portable custom typing engines~\cite{wang2006automatic,vanommeslaeghe2012automation-1,vanommeslaeghe2012automation-2,yesselman2012match}, simply generating examples is sufficient to train espaloma to reproduce this typing, translating it into a model that is easy to extend by providing more quantum chemical training data.
Some force fields have traditionally been typed by hand, making them difficult to automate~\cite{yesselman2012match}; espaloma can in principle learn to generalize from these examples, provided care is taken to avoid overfitting during training.
As we have shown here, espaloma also provides a convenient way to rapidly build new force fields directly from quantum chemical data.

\paragraph{Modern machine learning frameworks offer flexibility in fitting potentials}

The flexibility afforded by modern machine learning frameworks solves a long-standing problem in molecular simulation in which it is extremely difficult to assess whether a new functional form might lead to significant benefits in modeling multiple properties of interest.
While efforts such as the Open Force Field Initiative aim to streamline the process of refitting force fields~\cite{qiu2020development}, the ease of refitting models in machine learning frameworks makes it extremely easy to experiment with new functional forms: Modern automatic differentiation in these frameworks means that only the potential need be implemented, and gradients are automatically computed.

This enables a wide variety of exploration:
Simple improvements could be widely implemented in current molecular simulation packages including adjusting the 1-4 Lennard-Jones and electrostatics scaling parameters, producing 1-4 interaction parameters that override Lennard-Jones combining rules, exploring different Lennard-Jones combining rules~\cite{waldman1993new}, changing the van der Waals treatment to alternative functional forms (such as Buckingham exp-6~\cite{toennies1973validity} or Halgren potentials~\cite{halgren1992representation}), and refitting force fields for various non-bonded treatments (such as PME~\cite{darden1993particle} and reaction field electrostatics~\cite{kubincova2020reaction}).
Many simulation packages provide support for Class~II molecular mechanics force fields~\cite{maple1994derivation,hwang1994derivation}, which include additional coupling terms that can drastically reduce errors in modeling quantum chemical energies at essentially no meaningful impact on cost due to the $\mathcal{O}(N)$ number of these terms; simple extensions to espaloma's architecture can easily predict the parameters for these coupling terms from additional symmetry-preserving features.

More radical potential explorations could involve assessing different algebraic functional forms---modern simulation packages such as OpenMM have the ability to automatically differentiate and compile symbolic algebraic expressions to produce optimized force kernels for simulation on fast GPUs~\cite{eastman2012accelerating,eastman2017openmm}.
Excitingly, the simplicity of incorporating a new generation of quantum machine learning (QML) potentials~\cite{von2018quantum}---such as ANI~\cite{smith2017ani,devereux2020extending} and SchNet~\cite{schutt2018schnet}---means that it will be easy to explore hybrid potentials that combine the flexibility of QML potentials at short range with the accuracy of physical forces at long range~\cite{unke2019physnet}.

\paragraph{Espaloma can enable modular loss functions and regularization}

The ease at which the loss function can be augmented with additional terms enables the addition of other classes of loss terms to the loss function.
For example, one of the molecules considered in the Tyk2:inhibitor system included a cyano group which proved to be slightly unstable with hydrogen mass repartitioning at 4 fs timesteps.
The loss function could either be augmented to regularize parameters to increase stability (penalizing short vibrational periods) or to include other data classes (such as Hessians and/or torsion drive data) to improve fits to particular aspects.
While this will require tuning of the weighting of different loss classes, these parameters can be selected automatically via cross-validation strategies.

\paragraph{Espaloma can enable Bayesian force field parameterization and model uncertainty quantification}

% * Could go full Bayesian and use likelihoods and priors
While much of the history of molecular simulation has focused on quantifying the impact of statistical uncertainty~\cite{grossfield2009quantifying,chodera2016simple,grossfield2018best}, critical studies over the last decade~\cite{Cailliez:TheJournalofChemicalPhysics:2011, Cailliez:J.Comput.Chem.:2014,Angelikopoulos:TheJournalofChemicalPhysics:2012, Hadjidoukas:JournalofComputationalPhysics:2015,Cooke:BiophysicalJournal:2008,Zhou::2017, Rizzi:MultiscaleModel.Simul.:2012,Wu:Phil.Trans.R.Soc.A:2016,Kulakova:ArXiv170508533Phys.Stat:2017,Patrone:ArXiv180102483Phys.:2018} have improved our ability to quantify and propagate predictive uncertainty in molecular mechanics force fields by quantifying contributions from model uncertainty---which is frequently the major source of predictive uncertainty in applications of interest.
While most attention has been focused on the \textit{continuous parameters} of the force field model with fixed model form, some progress has been made in discrete model selection among candidate model forms~\cite{Wu:TheJournalofChemicalPhysics:2016, Messerly:TheJournalofChemicalPhysics:2017,Madin:ArXiv210507863Phys.Stat:2021}.

It remains an open problem to rigorously quantify uncertainty in other important parts of the model definition---especially in the definitions of atom-types.
These ``chemical perception'' definitions can involve very large spaces of discrete choices, and crucially influence the behavior of a generalizable molecular mechanics model~\cite{mobley2018escaping,Zanette:J.Chem.TheoryComput.:2019}.

An important benefit of the present approach is that it reduces the mixed continuous-discrete task of ``being Bayesian about atom-types'' to the more familiar task of ``being Bayesian about neural network weights.''
Bayesian treatment of neural networks---while also intractable---has been the focus of productive study and methodological innovation for decades~\cite{Neal::1996}.

We anticipate that Bayesian extensions of this work will enable more comprehensive treatment of predictive uncertainty in molecular mechanics force fields.

\paragraph{Ensuring full chemical equivalence is nontrivial}

In the current experiments, espaloma used a set of atom features (one-hot encoded element, hybridization, aromaticity, formal charge, and membership in rings of various sizes) easily computed using a cheminformatics toolkit; no bond features were used (see {\bf Detailed Methods}).
While this provided excellent performance, the non-uniqueness of formal charge assignment (obvious in molecules such as guanidinium where resonance forms locate the formal charge on different atoms) does not guarantee the assigned parameters will respect chemical equivalence (a form of invariance) in cases where these atom properties are not unique.
Ensuring full chemical equivalence would require modifications to this strategy, such as omission of non-unique features (which may require additional data or pre-training to learn equivalent chemical information), averaging of the output of one or more stages over equivalent resonance forms, or architectures such as transformers that more fully encode chemical equivalence. 

%\paragraph{Espaloma supports modular loss functions}

%While the current work used a traditional machine learning setting in which a loss function representing discrepancies from a \ optimize a loss function and prod

%%%%%%%%%%%%%%%%%%%%%%%%%%%%%%%%%%%%%%%%%%%%%%%%%%%%%%%%%%%%%%%%%%%%%%%%%

\section{Detailed Methods}
\label{sec:experimental-details}

\subsection{Code and Parameter Availability}
\label{sec:code-availability}

The Python code used to produce the results discussed in this paper is distributed open source under MIT license [\texttt{https://github.com/choderalab/espaloma}].
Core dependencies include PyTorch 1.9.1~\cite{paszke2017automatic}, Deep Graph Library 0.6.0~\cite{wang2019deep}, the Open Force Field Toolkit 0.10.0~\cite{mobley2018open, jeff_wagner_2020_4057038}, and OpenMM 7.7.0~\cite{eastman2017openmm}.

{\bf Describe how espaloma can be used in OpenMM via openmmtools, and describe which model is available as espaloma-0.0.2}

\subsection{Datasets}
\label{sec:datasets}

The typed ZINC validation subset distributed with parm@Frosst~\cite{parm_frosst} was used in atom typing classification experiments (Section~\ref{sec:stage1}).

For MM fitting experiments (Section~\ref{sec:mm_fitting}), we employed molecules the PhAlkEthOH dataset~\cite{PhAlkEthOH-1.0}, parametrized with GAFF-1.81~\cite{doi:10.1002/jcc.20035} using Antechamber~\cite{wang2004development,wang2006automatic} from AmberTools21, and generated molecular dynamics (MD) snapshots with annotated energies according to the procedure detailed below (Section~\ref{sec:md-simulation-details}).
We filtered out molecules with a gap between minimum and maximum energy larger than 0.1 Hartree (62.5 kcal/mol).

For QM fitting experiments (Section~\ref{sec:qm-fitting}), datasets hosted on QCArchive~\cite{smith2020molssi} are used.
We filter out snapshots with energies more than 0.1 Hartree (62.5 kcal/mol) higher than the minima.
Within all datasets, we randomly select training, test, and validation sets with 80:10:10 partitions.

\subsection{Machine learning experimental details}
\label{sec:ml-details}

The input features of the atoms included the one-hot encoded element, as well as the hybridization, aromaticity, (various sized-) ring membership, and formal charge thereof, assigned using the OpenEye Toolkit (OpenEye Scientific Software).

All models are trained with 5000 epochs with the Adam optimizer~\cite{kingma2014adam};
early stopping was used to select the epoch with lowest validation set loss.

% \textbf{[JDC: How did you train the models? Which optimizer did you use? Did you use early stopping to avoid overfitting? If not, why not? Did you repeat the experiments more than once? If so, do we report the best performance over all experiments?]}
Hyperparameters, namely choices of graph neural network layer architectures (GIN~\cite{xu2018powerful}, GCN~\cite{DBLP:journals/corr/KipfW16}, GraphSAGE~\cite{hamilton2017inductive}, SGConv~\cite{DBLP:journals/corr/abs-1902-07153}), depth of graph neural network and of Janossy pooling network (3, 4, 5, 6), activation functions (ReLU, sigmoid, tanh), learning rates (1e-3, 1e-4, 1e-5), and per-layer units (16, 32, 64, 128, 256, 512) were briefly optimized with a grid search using validation sets on the MM fitting experiment.
As a result, we use three 128-units GraphSAGE~\cite{hamilton2017inductive} layers with ReLU activation function for stage I and four 128-units feed-forward layers with ReLU activation for stage II and III.
Reported metrics: $R^2$: the coefficient of determination, RMSE: root mean square error, MAPE: mean absolute percentage error; note that the MAPE results we report is not multiplied by 100, and therefore denotes the fractional error.
The annotated 95\% confidence intervals are calculated by bootstrapping the test set 1000 times to account for finite-size effects in the composition of the test set.

\subsection{Molecular dynamics simulation details}
\label{sec:md-simulation-details}
High-temperature MD simulations described in Section \ref{sec:mm_fitting} were initialized using RDKit's default conformer generator followed by energy minimization in OpenMM 7.5, with initial velocities assigned randomly to the target temperature.
Vacuum trajectories were simulated without constraints using \texttt{LangevinIntegrator} from OpenMM~\cite{eastman2017openmm} using a temperature of 500~K, collision rate of 1/picosecond, and a timestep of 1~fs.
500 samples (5 ns) are collected with 10000 steps (10 ps) between each sample.

\subsection{Alchemical free energy calculations}
\label{sec:alchemical-free-energy-calculations}
We used the perses 0.9.5 relative alchemical free energy calculation infrastructure~\cite{perses} (based on OpenMM 7.7~\cite{eastman2017openmm} and openmmtools 0.21.2~\cite{Chodera:openmmtools:2021}) to compare performance on the Tyk2 kinase:inhibitor benchmark set from the Schrodinger JACS benchmark set~\cite{wang2015accurate} as curated by the OpenFF protein-ligand benchmark 0.2.0~\cite{protein-ligand-benchmark}.
In order to assess the impact of espaloma small molecule parameters and charges in isolation, we used the Amber ff14SB protein force field~\cite{maier2015ff14sb}, and performed simulations with either OpenFF~1.2.0 (\textsc{openff-1.2.0}) or the espaloma Joint model trained on OpenFF Gen2 Optimization and PepConf datasets (\textsc{espaloma-0.2.2}) available through the openmmforcefields 0.11.0 package~\cite{openmmforcefields}.
Notably, none of the ligands appearing in this set appear in the training set for either force field.
All systems were explicitly solvated with a 9~\AA  buffer around the protein with TIP3P water~\cite{jorgensen1983comparison} and use the Joung and Cheatham monovalent counterion parameters~\cite{joung2008determination} to model a neutral system with 300~mM NaCl salt.
The same transformation network provided in the OpenFF protein-ligand benchmark set was used to compute alchemical transformations, and absolute free energies up to an additive constant were estimated from a least-squares estimation strategy~\cite{xu2019optimal} as implemented in the OpenFF arsenic package~\cite{arsenic}.
Both experimental and calculated absolute free energies were shifted to their respective means before computing statistics, as in \citep{wang2015accurate}.

Alchemical free energy calculations used replica exchange among Hamiltonians with Gibbs sampling complete mixing exchanges each iteration~\cite{chodera2011replica}, simulating 5~ns/replica with 1~ps between exchange attempts.
12 alchemical states were used.
Simulations were conducted at 300~K and 1~atm using a Monte Carlo Barostat and Langevin BAOAB integrator~\cite{leimkuhler2016efficient} with bonds to hydrogen constrained, a collision rate of 1/ps, 4 fs timestep, and heavy hydrogen masses.
Atom mappings were generated from the provided geometries in the benchmark set, mapping atoms that were within 0.2Å and subsequently correcting the maps to be valid with the \texttt{use\_given\_geometries} functionality of perses.

\bibliography{main, implicit_solvent_refs, forcefield_uq_refs, implicit_solvent_refs_openff}

\section{Acknowledgments}
\label{sec:acknowledgments}

The authors wish to thank Yutong Zhao (\href{https://orcid.org/0000-0003-0090-7801}{0000-0003-0090-7801}) for helpful feedback while troubleshooting implementations of molecular mechanics models automatic differentiation packages, David L.~Mobley (\href{https://orcid.org/0000-0002-1083-5533}{0000-0002-1083-5533}) for feedback about issues with improper torsion models and other important considerations, Joshua T.~Horton (\href{https://orcid.org/0000-0001-8694-7200}{0000-0001-8694-7200}) for insightful comments especially regarding fitting and assessing torsions.
The authors wish to thank Christian L.\ Mueller (\href{https://orcid.org/0000-0002-3821-7083}{0000-0002-3821-7083}).
We thank OpenEye Scientific Software to provide us with the free academic license.

\section{Funding}
\label{sec:funding}

% Hidden for the anonymized version for review. 
YW acknowledges support from NSF CHI-1904822 and the Sloan Kettering Institute.
JF acknowledges support from NSF CHE-1738979 and the Sloan Kettering Institute.
JDC acknowledges support from NIH grant P30~CA008748, NIH grant R01~GM121505, NIH grant R01~GM132386, NSF~CHI-1904822, and the Sloan Kettering Institute.

\section{Disclosures}
\label{sec:disclosures}

% Hidden for the anonymized version for review. 
YW is among the co-founders and equity holders of Uli, Inc. and Uli (Shenzhen) Technology Co.\ Ltd. 

JDC is a current member of the Scientific Advisory Board of OpenEye Scientific Software, Redesign Science, Ventus Therapeutics, and Interline Therapeutics, and has equity interests in Redesign Science and Interline Therapeutics.
The Chodera laboratory receives or has received funding from multiple sources, including the National Institutes of Health, the National Science Foundation, the Parker Institute for Cancer Immunotherapy, Relay Therapeutics, Entasis Therapeutics, Silicon Therapeutics, EMD Serono (Merck KGaA), AstraZeneca, Vir Biotechnology, Bayer, XtalPi, Interline Therapeutics, the Molecular Sciences Software Institute, the Starr Cancer Consortium, the Open Force Field Consortium, Cycle for Survival, a Louis V. Gerstner Young Investigator Award, and the Sloan Kettering Institute.
A complete funding history for the Chodera lab can be found at \url{http://choderalab.org/funding}.

\section{Disclaimers}
\label{sec:disclaimers}

The content is solely the responsibility of the authors and does not necessarily represent the official views of the National Institutes of Health.

\section{Author Contributions}
% Hidden for the anonymized version for review. 
Conceptualization: JF, YW, JDC;
Data Curation: YW, JF, JEH;
Formal Analysis: YW;
Funding Acquisition: JDC;
Investigation: YW, JF;
Methodology: YW, JF;
Project Administration: JDC;
Resources: JDC;
Software: YW, JF, BK, DR, IZ, IP, MH;
Supervision: JDC;
Visualization: YW;
Writing -- Original Draft: YW;
Writing -- Review \& Editing: YW, JDC, JF, BK, JEH.

%%%%%%%%%%%%%%%%%%%%%%%%%%%%%%%%%%%%%%%%%%%%%%%%%%%%%%%%%%%%%%%%%%%%%%%%%%%%%%%%
% APPENDIX
%%%%%%%%%%%%%%%%%%%%%%%%%%%%%%%%%%%%%%%%%%%%%%%%%%%%%%%%%%%%%%%%%%%%%%%%%%%%%%%%

\title{Appendix: End-to-End differentiable construction of molecular mechanics force fields}
\newpage
\appendix
\maketitle

\section{A graph theoretic view of Class I molecular mechanics force fields}
\label{sec:molecular-mechanics-forcefields}

Consider a molecular graph $\mathcal{G}$ where atoms map to vertices $\mathcal{V}$ and covalent bonds map to edges $\mathcal{E}$.
In a class I molecular mechanics force field~\cite{maple1994derivation,hwang1994derivation,maple1994derivation3,peng1997derivation,maple1998derivation,dauber2019biomolecular,hagler2019force}, the parameters $\Phi_\mathtt{FF}$ assigned to a molecule graph $\mathcal{G}$ define how the total potential energy of a conformation $\mathbf{x} \in \mathbb{R} ^ {\rvert \mathcal{G} \rvert * 3}$ is computed from independent bond, angle, torsion, and nonbonded energy terms given the complete set of molecular mechanics parameters $\Phi_\mathrm{FF}$. 
\begin{eqnarray}
\label{mm_graph}
    U_\text{MM}(\mathbf{x}; \mathcal{G}, \Phi_\mathtt{FF}) 
    &= \sum\limits_{(v_i, v_j) \in \mathcal{G}_\text{bond}} & U_\text{bond}\left( r(\mathbf{x}; v_i, v_j); K_r(\Phi_\mathtt{FF}; v_i, v_j), r_0(\Phi_\mathtt{FF}; v_i, v_j) \right) \nonumber \\
    &+ \sum\limits_{(v_i, v_j, v_k) \in \mathcal{G}_\text{angle}} & U_\text{angle}\left( \theta(\mathbf{x}; v_i, v_j, v_k); K_\theta(\Phi_\mathtt{FF}; v_i, v_j, v_k), \theta_0(\Phi_\mathtt{FF}; v_i, v_j, v_k) \right) \nonumber \\
    &+ \sum\limits_{(v_i, v_j, v_k, v_l) \in \mathcal{G}_\text{torsion}} & U_\text{torsion}\left( \phi(\mathbf{x}; v_i, v_j, v_k, v_l) ; \{K_{\phi, n}(\Phi_\mathtt{FF}; v_i, v_j, v_k, v_l)\}_{n=1}^{n_\text{max}}, \phi_0(\Phi_\mathtt{FF}; v_i, v_j, v_k, v_l) \right) \nonumber \\
    &+ \sum\limits_{(v_i, v_j) \in \mathcal{G}_\text{Coulomb}} & U_\text{Coulomb} \left( r(\mathbf{x}; v_i, v_j) ; q(\Phi_\mathtt{FF}; v_i), q(\Phi_\mathtt{FF}; v_j) \right) \nonumber \\
    &+ \sum\limits_{(v_i, v_j) \in \mathcal{G}_\text{van der Waals}} & U_\text{van der Waals} \left( r(\mathbf{x}; v_i, v_j) ; \sigma(\Phi_\mathtt{FF}; v_i, v_j), \epsilon(\Phi_\mathtt{FF}; v_i, v_j) \right)
\label{eq: u_mm}
\end{eqnarray}
%\begin{equation}
%    U_{\Phi_\mathtt{FF}}(\mathbf{x}; \mathcal{G}) = 
%    \sum\limits_{r \in \mathcal{G}_\text{bond}} U_{\Phi_\mathtt{FF}, \text{bond}}(r(\mathbf{x})) 
%    + \sum\limits_{\theta \in \mathcal{G}_\text{angle}} U_{\Phi_\mathtt{FF}, \text{angle}}(\theta(\mathbf{x})) 
%    + \sum\limits_{\phi \in \mathcal{G}_\text{torsion}} U_{\Phi_\mathtt{FF}, \text{torsion}}(\phi(\mathbf{x}))
%    + \sum\limits_{(v_i, v_j) \in \mathcal{G}_\text{nb}} U_{\Phi_\mathtt{FF}, \text{nb}}(r(v_i, v_j; \mathbf{x}))
%\end{equation}
Here, the sets $\mathcal{G}_\text{bond}, \mathcal{G}_\text{angle}, \mathcal{G}_\text{torsion}$ denote the duples, triples, and quadruples of bonded atoms (vertices) in $\mathcal{G}$, while $\mathcal{G}_\text{Coulomb}$ and $\mathcal{G}_\text{van der Waals}$ denotes the set of atom (vertex) pairs separated by at least \textit{three} edges, since interactions separated by fewer edges are generally excluded.
$\epsilon_0$ denotes the vacuum electric permittivity.
The potential terms depend on distances $r(\mathbf{x}; v_i, v_j)$, angles $\theta(\mathbf{x}; v_i, v_j, v_k)$, and torsions (dihedral angles) $\phi(\mathbf{x}; v_i, v_j, v_k, v_l)$ measured for the corresponding atoms from the positions vector $\mathbf{x}$.
The various parameter functions---bond force constant $K_r$ and equilibrium distance $r_0$, angle force constant $K_\theta$ and equilibrium angle $\theta_0$, periodic torsion term barrier height $K_{\phi, n}$ and phase $\phi_0$ for periodicity $n$, partial charge $q$, Lennard-Jones radius $\sigma$ and well depth $\epsilon$---extract the parameters corresponding to specific sets of atoms (vertices) from the vector of molecular mechanics parameters $\Phi_\mathrm{FF}$ to compute that contribution to the total potential energy.

%Let $\mathcal{N}(\cdot)$ denote the set of graph neighbors of a given vertex.
%Then these sets are defined as
%\begin{gather}
%\mathcal{G}_\text{bond} = \{r_{ij} = (v_i, v_j), v_i \neq v_j, v_i \in \mathcal{N}(v_j)\};\\
%\mathcal{G}_\text{angle} = \{\theta_{ijk} = (v_i, v_j, v_k), v_i \neq v_j \neq v_k, v_i \in \mathcal{N}v_j, v_j \in %\mathcal{N}v_k\};\\
%\mathcal{G}_\text{torsion} = \{\phi_{ijkl} = (v_i, v_j, v_k, v_l), v_i \neq v_j \neq v_k \neq v_l, v_i \in \mathcal{N}(v_j), v_j \in %\mathcal{N}(v_k), v_k \in \mathcal{N}(v_l)\},\\
%\mathcal{G}_\text{nonbonded} = \{ (v_i, v_j), v_j \neq v_j, v_i \notin \mathcal{N}(v_j) \cup \mathcal{N}(\mathcal{N}(v_j)) \cup %\mathcal{N}(\mathcal{N}(\mathcal{N}(v_j)))\} .
%\end{gather}

The individual potential energy terms are computed by functions that generally take simple harmonic or periodic forms with respect to bond lengths and angles in terms of the parameters for those specific interactions:
\begin{eqnarray}
&U_\text{bond}(r; K_r, r_0) &= \frac{K_r}{2} (r - r_0)^2 \\
&U_\text{angle}(\theta; K_\theta, \theta_0) &= \frac{K_\theta}{2} (\theta - \theta_0)^2\\
&U_\text{torsion}(\phi; \{K_{\phi, n}\}, \phi_0) &= \sum\limits_{n=1}^{n_\text{max}} K_{\phi, n}\left[1 + \cos(n \phi)\right] \\
&U_\text{Coulomb}(r; q_i, q_j) &= \frac{1}{4 \pi \epsilon_0} \frac{q_i \, q_j}{r}  \\
&U_\text{van der Waals}(r; \sigma, \epsilon) &= 4 \epsilon \left[ \left (\frac{\sigma}{r} \right)^{12} - \left(\frac{\sigma}{r}\right)^{6} \right]
\label{eq: detailed_u_mm}
\end{eqnarray}
%\begin{gather}
%U_{\Phi_\mathtt{FF}, \text{bond}}(r_{ij}(\mathbf{x})) = \frac{1}{2}k_{\text{bond}, r}(r_{ij}(\mathbf{x}) - b_{\text{bond}, r})^2; r_{ij}(\mathbf{x}) = \mid\mid \mathbf{x}_i - \mathbf{x}_i \mid\mid;\\
%U_{\Phi_\mathtt{FF}, \text{angle}}(\theta_{ijk}(\mathbf{x})) = \frac{1}{2}k_{\text{angle}, \theta}(\theta_{ijk}(\mathbf{x}) - b_{\text{angle}, \theta})^2; \theta_{ijk}(\mathbf{x}) = \langle \mathbf{x}_i, \mathbf{x}_j, \mathbf{x}_k \rangle;\\
%U_{\Phi_\mathtt{FF}, \text{torsion}}(\theta_{ijkl}(\mathbf{x})) = \sum\limits_{n=1}^{n_\text{max}}k_{\text{torsion}, \phi, n}(1 + \cos(n \phi - \phi_{0, n})); \phi_{ijkl}(\mathbf{x}) = \langle \mathbf{x}_i, \mathbf{x}_j, \mathbf{x}_k, \mathbf{x}_l \rangle,
%\end{gather}
%where the bracket denote the angle between three points or the dihedral angle between the two planes formed by the two sets three %consecutive points among the four points on $\mathbb{R}^3$. 
%The nonbonded energies $U_\text{nonbonded}$ are usually modelled as a sum of Coulombic
%\begin{equation}
%\end{equation}
%and van der Waals terms, which could be modeled using, for instance, Lenard-Jones 12-6 form, 
%\begin{equation}
%\end{equation}
%where $q(\cdot)$ renotes the charge of the atom, $r(\cdot, \cdot)$ the distance between atoms, and $\epsilon_0$ the vacuum electric permittivity.
where $K_r$ and $K_\theta$ denote force constants for bonds and angles, $r_0$ and $\theta_0$ denote equilibrium bond lengths and angles, $K_{\phi, n}$ denotes a torsion energy factor (which can be positive or negative) for periodicity $n$, $\epsilon_0$ is the permittivity of free space, and $q_i$ denote partial charges.
For force fields like AMBER~\cite{maier2015ff14sb} and CHARMM~\cite{best2012optimization}, the effective Lennard-Jones well depth $\epsilon_{v_i, v_j}$ and radius $\sigma_{v_i, v_j}$ parameters for an interacting pair of atoms are computed from atomic parameters for atoms $v_i$ and $v_j$ using Lorentz-Berthelot combining rules~\cite{delhommelle2001inadequacy},
\begin{eqnarray}
&\sigma(\Phi_\mathtt{FF}; v_i, v_j) &= \frac{1}{2} (\sigma_{v_i} + \sigma_{v_j}) \\
&\epsilon(\Phi_\mathtt{FF}; v_i, v_j) &= \sqrt{\epsilon_{v_i} \epsilon_{v_j}}
\end{eqnarray}
though alternative combining rules~\cite{waldman1993new} or even pair-specific parameters~\cite{baker2010accurate} are also possible.

Typically, atom type based force field parameterization engines~\cite{mobley2018escaping} such as those used in AMBER~\cite{maier2015ff14sb} or CHARMM~\cite{best2012optimization} assign parameters based on templates (for biopolymer residues or solvents) or through chemical perception algorithms (as in the case of GAFF~\cite{wang2004development,wang2006automatic} or CGenFF~\cite{vanommeslaeghe2012automation-1,vanommeslaeghe2012automation-2}). 
More recently, an approach to assigning atom, bond, angle, and torsion parameters directly based on the common SMARTS/SMIRKS chemical perception language was introduced, which bypasses the need to define parameter classes in terms of atom types, but still retains discrete interaction types~\cite{mobley2018escaping}.
Automatically fitting these type definitions still represents an intractable mixed discrete-continuous optimization problem.
In the next section, we will show how a continuous, differentiable model can assign parameters directly, without the use of discrete atom or interaction types.

%%%%%%%%%%%%%%%%%%%%%%%%%%%%%%%%%%%%%%%%%%%%%%%%%%%%%%%%%%%%%%%%%%%%%%%%%%%%%%%%
% espaloma THEORY
%%%%%%%%%%%%%%%%%%%%%%%%%%%%%%%%%%%%%%%%%%%%%%%%%%%%%%%%%%%%%%%%%%%%%%%%%%%%%%%%

\section{A brief introduction to graph neural networks}
\label{sec:introduction-to-graph-nets}

In the context of molecular machine learning, molecules are modelled as undirected graphs of bonded atoms, where each atom and bond can carry attributes reflecting their chemical nature from which complex chemical features can be learned. 
If we write this as a tuple of three sets,
\begin{equation}
\mathcal{G} = \{ \mathcal{V, E, U}\}
\end{equation} 
Here, $\mathcal{V}$ is the set of the vertices (or nodes) (atoms), $\mathcal{E}$ the set of edges (bonds), and $\mathcal{U} = \{ \mathbf{u}\}$ the universal (global) attribute. 

In a graph neural network ($\operatorname{GN}$) a set of functions (with learnable parameters $\Phi_{\operatorname{NN}}$) govern the three stages used in both training and inference of a graph neural network: \textit{initialization}, \textit{propagation}, and \textit{readout}. 
Following one of the most general description of the message-passing procedure in the propagation stage in \cite{battaglia2018relational}, we briefly review the message-passing steps in graph neural networks, where node attribuves $\mathbf{v}$, edge attributes $\mathbf{e}$, and global attributes $\mathbf{u}$ are updated according to:
    \begin{align}
    \mathbf{e}_k^{(t+1)} &= \phi^e(\mathbf{e}_k^{(t)}, \sum_{i \in \mathcal{N}^e_k}\mathbf{v}_i, \mathbf{u}^{(t)}),
    &&\text{edge update}\\
    \bar{\mathbf{e}}_i^{(t+1)} &=\rho^{e\rightarrow v}(E_i^{(t+1)}),
    &&\text{edge-to-node aggregate}\\
    \mathbf{v}_i^{(t+1)} &= \phi^v(\bar{\mathbf{e}}_i^{(t+1)}, \mathbf{v}_i^{(t)}, \mathbf{u}^{(t)}),
    &&\text{node update}\\
    \bar{\mathbf{e}}^{(t+1)} &= \rho^{e \rightarrow u}(E^{(t+1)}),
    &&\text{edge-to-global aggregate}\\
    \bar{\mathbf{u}}^{(t+1)} &= \rho^{v \rightarrow u}(V^{(t)}),
    &&\text{node-to-global aggregate}\\
    \mathbf{u}^{(t+1)} &= \phi^u(\bar{\mathbf{e}}^{(t+1)}, \bar{\mathbf{v}}^{(t+1)}, \mathbf{u}^{(t)}),
    &&\text{global update}
    \end{align}
    where $E_i=\{ \mathbf{e}_k, k\in \mathcal{N}_i^v\}$ is the set of attributes of edges connected to a specific node, $E_i = \{ e_k, k \in 1, 2, ..., N^e\}$ is the set of attributes of all edges, $V$ is the set of attributes of all nodes, and $\mathcal{N}^v$ and $\mathcal{N}^e$ denote the set of indices of entities connected to a certain node or a certain edge, respectively. 
    $\phi^e$, $\phi^v$, and $\phi^u$ are update functions that take the \textit{environment} of the an entity as input and update the attribute of the entity, which could be stateful or not; 
    $\rho^{e \rightarrow v}$, $\rho^{e \rightarrow u}$, and $\rho^{v \rightarrow u}$ are aggregation functions that aggregate the attributes of multiple entities into an \textit{aggregated} attribute which shares the same dimension with each entity. 
    Note that it is common that the edges do not hold attribute but only pass messages onto neighboring nodes. 
    For all models we survey here, edge-to-global update does not apply and global attribute does not present until the readout stage, when a sum function is applied to form the global representation ($\mathbf{u} = \sum_i \mathbf{v}_i$). 
    We review the specifics of the graph neural network architectures that we considered here in the Appendix (Table~\ref{gn}).
    
\subsection{Training and inference}
\label{training_and_inference}

While traditional force fields based on discrete atom types are only differentiable with respect to the molecular mechanics parameters they assign, our model is fully differentiable in \emph{all} model parameters $\Phi_{\operatorname{NN}}$ that govern both the assignment of continuous atom embeddings $h_{v}$ (which replace discrete atom types) and subsequent assignment of MM parameters $\Phi_{\text{FF}}$.
We can therefore use gradient-based optimization to tune all of these parameters to fit arbitrary differentiable target functions, such as energies and forces of snapshots of molecules, equilibrium physical property measurements, fluctuation properties, or experimentally measured free energy differences.
In this way, straightforward gradient-based optimization simultaneously learns the continuous equivalent of atom typing simultaneous with parameter assignment strategies, effectively solving the intractable traditional force field parmaeterization problem of mixed continuous-discrete optimization by replacing it with a fully continuous optimization problem.
% This enables us to use gradient-based optimization to adapt the chemical environment embedding definitions and parameterization strategy simultaneously to fit arbitrary differentiable target functions.

\subsubsection{A linear basis facilitates bond and angle parameter optimization}
\label{sec: linear_basis}
\begin{figure}[h]
    \centering
    \includegraphics[width=0.6\textwidth]{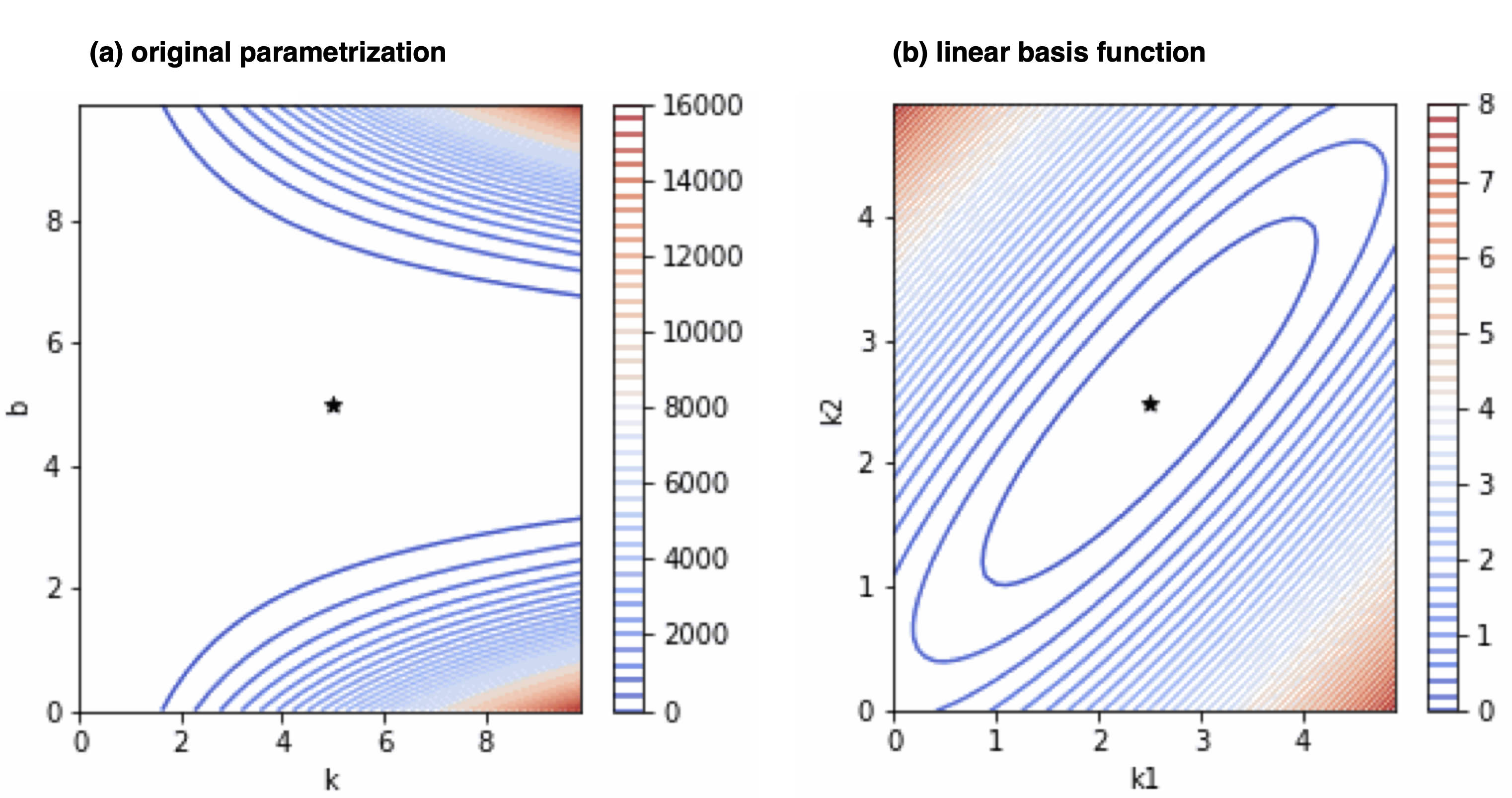}
    \caption{
    \textbf{Linear basis parameterization of bonds and angles facilitates robust optimization.}
    A one-dimensional harmonic oscillator system ($u = \frac{K_0}{2} (x - b_0) ^ 2$) was simulated with $K_0$ = 5, $b_0$ = 5 as reference parameters.
    Training set conformations $x$ were sampled from a uniform distribution with $x \sim \operatorname{Uniform}(4, 6)$.
    The loss function (squared energy error) as a function of original MM parameters ($K$, $b$) shown in (a) is clearly more difficult to optimize due to the large difference in gradient magnitudes between $K$ and $b$ parameters, while the linear basis parametrization (Section~\ref{sec: linear_basis}) showing the loss as a function of transformed $(K_1, K_2)$ in (b) is much simpler to optimize. 
    }
    \label{fig:linear_basis}
\end{figure}

The harmonic functional form (Equation~\ref{eq: detailed_u_mm}),
\begin{equation}
u(x; K, b) = \frac{K}{2} (x-b)^2 ,
\label{eq: harmonic}
\end{equation}
is frequently used in MM potentials for modeling bond and angle energies.
Although one can directly optimize force constants $K$ and equilibrium bond or angle values $b$, this empirically result in significant difficulties in training (Figure~\ref{fig:linear_basis}).
This can be seen through the loss function and its gradient for a simple toy model consisting of a single bond and reference parameters ($K_0, b_0$):
\begin{eqnarray}
L(K, b) &=& \sum_{n=1}^N [ u(x_n; K, b) - u(x_n; K_0, b_0) ]^2 \\
&=& \sum_{n=1}^N [ \frac{K}{2} (x_n - b)^2 - \frac{K_0}{2} (x_n - b_0)^2 ]^2
\end{eqnarray}
Qualitatively, the loss function landscape for $K$ around $K_0$ is relatively flat, while it is steeply varying for $b$ around $b_0$, frustrating efficient optimization; Figure~\ref{fig:linear_basis}a demonstrates this for a simple toy system.
To circumvent this issue, we use the approach described by \citet{https://doi.org/10.1002/jcc.23897} and translate the harmonic functional form Equation~\ref{eq: harmonic} in which $(K, b_0)$ are optimized instead into a linear combination ($K_1, K_2$) of basis functions is optimized instead:
\begin{equation}
u(x; K_1, K_2) = \underbrace{\frac{K_1}{2} (x-b_1)^2 + \frac{K_2}{2} (x-b_2)^2}_\text{linear combination of basis} + \underbrace{\left(- K_1 b_1^2 - K_2 b_2^2 + \frac{(K_1 b_1 + K_2 b_2)^2}{K_1 + K_2}\right)}_\text{$x$-independent constant}
\end{equation}
where $b_1$ and $b_2$ are two positive constants chosen to be smaller and larger than any possible valid bond or angle---they must satisfying $b_1 < x < b_2 $ for $\forall x$ that might be observed.
Here, Stage~III of the espaloma pipeline would predict $K_1, K_2$, which would be used to compute the MM parameters $(K, b)$ via
\begin{align}
    K &= K_1 + K_2 \\
    b &= \frac{K_1 b_1 + K_2 b_2}{K_1 + K_2}
\end{align}
When training to fit energies, our loss function subtracts the mean from the training set energies and predicted energies for each molecule to remove the arbitrary offset constant introduced by this transformation of variables.

\subsubsection{Training by potential energies}
\label{sec:training-by-potential-energies}

Given a training set of molecules $\{ \mathcal{G}_m\}$, $m = 1, \ldots, M$, with corresponding $n = 1, \ldots, N_m$ conformational snapshots $\mathbf{x}_{m,n} \in \mathbf{R}^{(\mid \mathcal{G}_m \mid \times 3)}$ for each molecule $m$, and reference potential energies $\{ U_\text{ref}(x_{m,n}; ; \mathcal{G}_m) \}$, the model parameters $\Phi_{\operatorname{NN}}$ can be optimized to minimize a measure of deviation between the reference energies and model energies given by the composed force field $\Phi_{\text{FF}, \Theta}$. 
For example, a squared loss function can be used to quadratically penalize deviations:
\begin{equation}
\label{eq:loss-function}
\mathcal{L}(\Phi_{\operatorname{NN}}) =  \sum_{m=1}^M \sum_{n=1}^{N_m} w_{m,n} \, \left[ U_\text{ref}(x_{m,n}; \mathcal{G}_m) - U_{\Phi_{\text{FF}, \Theta}}(\mathbf{x}_{m,n}; \mathcal{G}_m) \right]^2
\end{equation}
In this work, we took the weights $w_{m,n} = 1$, but more sophisticated weighting schemes can be used to emphasize low-energy snapshots where MM potentials are intended to be more accurate when fitting to quantum chemical datasets, as in ForceBalance~\cite{wang2013systematic,wang2014building}.
In a Bayesian context, the loss function in Eq.~\ref{eq:loss-function} would correspond to the negative log likelihood for a normal error model.
Other loss or likelihood function forms may also be useful in ensuring a minority of molecules or geometries for which the MM functional form poorly reproduces quantum chemical energetics do not dominate the fit.
% \begin{equation}
% \mathcal{L}(\Phi_{\operatorname{NN}}) =  \prod_i P(U_{i, \text{ref}} \mid \mathcal{N}(U_{\Phi_{\text{FF}, \Theta}}(\mathbf{x}_i, \mathcal{G}), \mathbf{\Sigma}))
% \end

Additional terms can be added to the loss function to regularize the choice of parameters, such as penalizing unphysical choices (such as unphysical magnitudes or parameter regions) or to provide more useful models (such as penalizing bond vibrations with effective periods that are so short they would require extremely small timesteps).

When fitting to quantum chemical energies, an additive offset for each molecule $m$ corresponding to the heat of formation generally cannot be accounted for using standard MM functional forms.
To address this, we subtract the per-molecule mean potential energy for both predicted MM and reference quantum chemical energies in formulating the loss function, with the goal of ensuring that relative conformational energetics are accurately approximated even if the heat of formation cannot be computed.

\subsubsection{Other differentiable objectives}

In quantum chemical calculations, nuclear potential energy gradients are often available at very low cost once the wavefunction has been solved to compute the energy.
To exploit the rich information available in these gradients~\cite{christensen2020role, chmiela2019, sauceda2019, sauceda2020molecular} $\nabla_\mathbf{x} U$, it is possible to incorporate both energies and gradients into the loss function, with automatic differentiation used to compute partial derivatives of the espaloma energy function in terms of both coordinates and parameters.
Additional terms penalizing deviations in more complex quantum chemical properties, such as polarizabilities or vibrational frequencies, could also be incorporated~\cite{wang2013systematic,wang2014building}.
In this work, however, we found that incorporation of gradient information does not significantly enhance the performance of the model.
We leave this topic to future study.

In addition to incorporating configuration-dependent properties, if the goal is to build a complete molecular mechanics force field to model entire physical systems in the condensed phase, additional terms could be added to the objective function to quantify the deviation in physical properties, such as densities, dielectric constants, and enthalpies of vaporization~\cite{horn2004development,wang2013systematic,wang2014building,wang2017building,boulanger2018optimized,simonboothroyd_2020_openff_evaluator}, or even experimental transfer free energies.
%(as illustrated in Section~\ref{sec:fitting-hydration-free-energies}).
% These properties and their gradients can be estimated at perturbed values of $\Phi_{\operatorname{NN}}$ using reweighting methods~\cite{wieder2021teaching}. 

\subsubsection{Assigning espaloma parameters to molecules}

In contrast with many quantum machine learning (QML) force fields that use a neural model to compute the energy and gradient for every configuration~\cite{smith2017ani, schutt2017schnet, C7SC04934J}, {\bf espaloma} uses a neural model to assign molecular mechanics parameters \emph{once} at the beginning of a simulation, using only the chemical information about all components in a system.
Once generated, the MM parameters $\Phi_\text{FF}$ for the system can be seamlessly ported to molecular mechanics packages that can exploit accelerated hardware~\cite{eastman2017openmm, harvey2009acemd, salomon2013routine, van2005gromacs} to achieve high accuracy for individual systems at the same speed as traditional force fields.

%%%%%%%%%%%%%%%%%%%%%%%%%%%%%%%%%%%%%%%%%%%%%%%%%%%%%%%%%%%%%%%%%%%%%%%%%%%%%%%%
% ATOM TYPE RECOVERY
%%%%%%%%%%%%%%%%%%%%%%%%%%%%%%%%%%%%%%%%%%%%%%%%%%%%%%%%%%%%%%%%%%%%%%%%%%%%%%%%

\section{Graph neural network (GNN) architectures considered in this paper}

We considered the following graph neural network architectures available in DGL~\cite{wang2019deep} in this paper:
\begin{table}[h!]
    \label{table:gn}
    \centering
    \begin{tabular}[\textwidth]{c c c c }
    \hline
        {\bf Model} & 
        {\bf Edge update $\phi^e$} &
        {\bf Edge aggregate $\rho^{e\rightarrow v}$} &
        {\bf Node update $\phi^v$} \\
    \hline
        GCN & Identity & Mean &
        $\operatorname{NN}$\\
        
        EdgeConv & $\operatorname{ReLU}(W_0 (\mathbf{v}_i - \mathbf{v}_j) + W_1 \mathbf{v}_i)$ & Max & Identity \\
        
        GraphSAGE &
        Identity &
        Mean$^*$ & 
        $\operatorname{Normalize}(\operatorname{NN}([\mathbf{v}:\mathbf{e}]))$\\
        
        GIN &
        Identity &
        Sum$^*$ &
        $\operatorname{NN}((1 + \epsilon) \mathbf{v} + \mathbf{e})$
        
    \\\hline
    \end{tabular}
    \caption{\textbf{Summary of representative graph neural network architectures by edge update, edge aggregate, and node update types.} Models analyzed here include: GCN~\cite{DBLP:journals/corr/KipfW16}, EdgeConv~\cite{wang2019dynamic}, GraphSAGE~\cite{hamilton2017inductive}, and GIN~\cite{xu2018powerful}. 
    Other architectures evaluated---TAGCN~\cite{Du:ArXiv171010370CsStat:2018} and SGC~\cite{wu2019simplifying}---involve multi-step propagation, which could be expressed as a combination of these updates and aggregates. \\ *: Multiple aggregation functions studied in the referenced publication.}
    \label{gn}
\end{table}

\section{Code snippets for using espaloma}
\subsection{Designing and training espaloma model}
\begin{lstlisting}[language=Python, numbers=none, caption=Defining and training a modular espaloma model.]
import torch, dgl, espaloma as esp

# retrieve OpenFF Gen2 Optimization Dataset
dataset = esp.data.dataset.GraphDataset.load("gen2").view(batch_size=128)

# define espaloma stage I: graph -> atom latent representation
representation = esp.nn.Sequential(
    layer=esp.nn.layers.dgl_legacy.gn("SAGEConv"), # use SAGEConv implementation in DGL
    config=[128, "relu", 128, "relu", 128, "relu"], # 3 layers, 128 units, ReLU activation
)

# define espaloma stage II and III: 
# atom latent representation -> bond, angle, and torsion representation and parameters
readout = esp.nn.readout.janossy.JanossyPooling(
    in_features=128, config=[128, "relu", 128, "relu", 128, "relu"],
    out_features={              # define modular MM parameters espaloma will assign
        1: {"e": 1, "s": 1}, # atom hardness and electronegativity
        2: {"log_coefficients": 2}, # bond linear combination, enforce positive
        3: {"log_coefficients": 3}, # angle linear combination, enforce positive
        4: {"k": 6}, # torsion barrier heights (can be positive or negative)
    },
)

# compose all three espaloma stages into an end-to-end model
espaloma_model = torch.nn.Sequential(
                 representation, readout, esp.nn.readout.janossy.ExpCoefficients(),
                 esp.mm.geometry.GeometryInGraph(), esp.mm.energy.EnergyInGraph(),
                 esp.nn.readout.charge_equilibrium.ChargeEquilibrium(),
)

# define training metric
metrics = [
    esp.metrics.GraphMetric(
            base_metric=torch.nn.MSELoss(), # use mean-squared error loss
            between=['u', "u_ref"],         # between predicted and QM energies
            level="g", # compare on graph level
    )
    esp.metrics.GraphMetric(
            base_metric=torch.nn.MSELoss(), # use mean-squared error loss
            between=['q', "q_hat"],         # between predicted and reference charges
            level="n1", # compare on node level
    )
]

# fit espaloma model to training data
results = esp.Train(
    ds_tr=dataset, net=espaloma_model, metrics=metrics,
    device=torch.device('cuda:0'), n_epochs=5000,
    optimizer=lambda net: torch.optim.Adam(net.parameters(), 1e-3), # use Adam optimizer
).run()

torch.save(espaloma_model, "espaloma_model.pt") # save model
\end{lstlisting}

\subsection{Deploying espaloma model}

\begin{lstlisting}[language=Python, numbers=none, caption=Using a trained espaloma model to assign parameters to a small molecule.]
# define or load a molecule of interest via the Open Force Field toolkit
from openff.toolkit.topology import Molecule
molecule = Molecule.from_smiles("CN1C=NC2=C1C(=O)N(C(=O)N2C)C")

# create an espaloma Graph object to represent the molecule of interest
import espaloma as esp
molecule_graph = esp.Graph(molecule)

# apply a trained espaloma model to assign parameters
espaloma_model = torch.load("espaloma_model.pt")
espaloma_model(molecule_graph.heterograph)

# create an OpenMM System for the specified molecule
openmm_system = esp.graphs.deploy.openmm_system_from_graph(molecule_graph)
\end{lstlisting}

\section{Close examination of molecular mechanics (MM) fitting experiments}
\label{sec:close-examination}

\paragraph{Ablation study: the role of energy centering, the inclusion of torsion energies, and the inclusion of small rings in datasets in the performance of MM fitting.}
\label{para:ablation}

\begin{table}[htbp]
    \centering
 \resizebox{\textwidth}{!}{%
 \setlength\tabcolsep{1.5pt}
\begin{tabular}{ll|llll|llll|llll}
\toprule
 {\bf R~C~T} & Part &             $U_\mathtt{total}$ &             $U_\mathtt{bond}$ &             $U_\mathtt{angle}$ &       $U_\mathtt{torsion}$ & $U_\mathtt{total}$ & $U_\mathtt{bond}$ & $U_\mathtt{angle}$ & $U_\mathtt{torsion}$ &                      $k_r$ &                      $b_r$ &                 $k_\theta$ &                 $b_\theta$ \\
 
 &
 & \multicolumn{4}{c}{RMSE (kcal/mol)}
 & \multicolumn{4}{c}{Centered RMSE (kcal/mol)}
 & \multicolumn{4}{c}{MAPE}
 \\
 
\midrule
     \multirow{2}{*}{000} &   test &     $0.8723_{0.8139}^{0.9334}$ &    $1.2725_{1.0575}^{1.4465}$ &     $1.2291_{0.9563}^{1.4102}$ &                            &   $0.7829_{0.7025}^{0.8428}$ &  $0.6501_{0.5919}^{0.7308}$ &   $0.4553_{0.4245}^{0.4860}$ &                                & $0.1238_{0.1206}^{0.1265}$ & $0.0026_{0.0025}^{0.0027}$ & $0.0733_{0.0722}^{0.0744}$ & $0.0067_{0.0066}^{0.0068}$ \\
    &   train &     $0.9183_{0.8343}^{0.9889}$ &    $1.3159_{1.0655}^{1.5514}$ &     $1.2844_{1.0052}^{1.5360}$ &                            &   $0.8681_{0.7993}^{0.9330}$ &  $0.7209_{0.6639}^{0.7771}$ &   $0.4815_{0.4431}^{0.5097}$ &                                & $0.1293_{0.1253}^{0.1320}$ & $0.0027_{0.0027}^{0.0028}$ & $0.0726_{0.0712}^{0.0736}$ & $0.0071_{0.0069}^{0.0073}$ \\
    
     \multirow{2}{*}{001} &   test &     $4.4033_{4.1317}^{4.6173}$ & $13.7162_{12.9510}^{14.5515}$ &     $7.9210_{7.3953}^{8.4048}$ & $0.0363_{0.0345}^{0.0384}$ &   $3.5238_{3.3984}^{3.6590}$ &  $3.0956_{2.9908}^{3.2187}$ &   $1.7086_{1.6367}^{1.7619}$ &     $0.0015_{0.0014}^{0.0017}$ & $1.0000_{1.0000}^{1.0000}$ & $0.0720_{0.0701}^{0.0740}$ & $0.4734_{0.4712}^{0.4763}$ & $0.0136_{0.0134}^{0.0138}$ \\
     &   train &     $4.2372_{3.9854}^{4.5129}$ & $13.2695_{12.1669}^{14.2219}$ &     $8.0467_{7.4667}^{8.6800}$ & $0.0353_{0.0330}^{0.0379}$ &   $3.5041_{3.3614}^{3.6277}$ &  $3.0241_{2.9115}^{3.1645}$ &   $1.7425_{1.6659}^{1.8182}$ &     $0.0016_{0.0014}^{0.0017}$ & $1.0000_{1.0000}^{1.0000}$ & $0.0705_{0.0686}^{0.0720}$ & $0.4803_{0.4778}^{0.4831}$ & $0.0137_{0.0135}^{0.0139}$ \\
     \multirow{2}{*}{010$^*$} &   test &     $0.0490_{0.0369}^{0.0585}$ &    $0.0206_{0.0158}^{0.0262}$ &     $0.0429_{0.0362}^{0.0497}$ &                            &   $0.0175_{0.0163}^{0.0187}$ &  $0.0130_{0.0121}^{0.0140}$ &   $0.0117_{0.0105}^{0.0127}$ &                                & $0.0016_{0.0016}^{0.0017}$ & $0.0001_{0.0001}^{0.0001}$ & $0.0020_{0.0019}^{0.0020}$ & $0.0006_{0.0006}^{0.0007}$ \\
    &   train &     $0.0394_{0.0315}^{0.0478}$ &    $0.0174_{0.0142}^{0.0206}$ &     $0.0332_{0.0271}^{0.0394}$ &                            &   $0.0169_{0.0160}^{0.0178}$ &  $0.0128_{0.0118}^{0.0135}$ &   $0.0113_{0.0103}^{0.0121}$ &                                & $0.0016_{0.0016}^{0.0016}$ & $0.0001_{0.0001}^{0.0001}$ & $0.0020_{0.0020}^{0.0021}$ & $0.0006_{0.0005}^{0.0006}$ \\
     \multirow{2}{*}{011}&   test &   $11.0688_{8.8832}^{12.8267}$ &    $0.0240_{0.0204}^{0.0284}$ &     $0.1801_{0.1535}^{0.2122}$ & $0.0174_{0.0145}^{0.0203}$ &   $0.0346_{0.0324}^{0.0369}$ &  $0.0168_{0.0157}^{0.0180}$ &   $0.0270_{0.0226}^{0.0307}$ &     $0.0001_{0.0001}^{0.0001}$ & $0.0020_{0.0020}^{0.0021}$ & $0.0001_{0.0001}^{0.0001}$ & $0.0020_{0.0020}^{0.0021}$ & $0.0009_{0.0008}^{0.0009}$ \\
    &   train &   $11.7819_{9.4086}^{14.0593}$ &    $0.0190_{0.0179}^{0.0203}$ &     $0.1899_{0.1545}^{0.2335}$ & $0.0185_{0.0152}^{0.0216}$ &   $0.0335_{0.0302}^{0.0357}$ &  $0.0161_{0.0150}^{0.0173}$ &   $0.0284_{0.0238}^{0.0321}$ &     $0.0001_{0.0001}^{0.0001}$ & $0.0020_{0.0020}^{0.0021}$ & $0.0001_{0.0001}^{0.0001}$ & $0.0020_{0.0020}^{0.0021}$ & $0.0009_{0.0008}^{0.0009}$ \\
    \multirow{2}{*}{100}&   test &     $0.6237_{0.5459}^{0.7049}$ &    $0.5224_{0.4774}^{0.5707}$ &     $0.6660_{0.5843}^{0.7498}$ &                            &   $0.4970_{0.4483}^{0.5526}$ &  $0.3380_{0.3176}^{0.3614}$ &   $0.4654_{0.4084}^{0.5205}$ &                                & $0.0469_{0.0459}^{0.0479}$ & $0.0012_{0.0012}^{0.0013}$ & $0.0645_{0.0624}^{0.0662}$ & $0.0111_{0.0107}^{0.0117}$ \\
    &   train &     $0.6615_{0.5706}^{0.7584}$ &    $0.6199_{0.5480}^{0.6905}$ &     $0.6939_{0.6107}^{0.7702}$ &                            &   $0.5693_{0.4963}^{0.6494}$ &  $0.3916_{0.3473}^{0.4304}$ &   $0.5293_{0.4489}^{0.6160}$ &                                & $0.0482_{0.0473}^{0.0493}$ & $0.0013_{0.0013}^{0.0013}$ & $0.0659_{0.0637}^{0.0677}$ & $0.0112_{0.0108}^{0.0117}$ \\
     \multirow{2}{*}{101} &   test &     $4.7719_{4.6354}^{4.9088}$ &    $9.5963_{9.1500}^{9.9684}$ &  $23.1972_{22.2922}^{24.2192}$ & $0.0501_{0.0480}^{0.0519}$ &   $4.3643_{4.2759}^{4.4498}$ &  $2.6994_{2.6075}^{2.7756}$ &   $3.6263_{3.5347}^{3.6948}$ &     $0.0020_{0.0019}^{0.0022}$ & $0.8410_{0.8386}^{0.8440}$ & $0.0263_{0.0259}^{0.0267}$ & $1.0000_{1.0000}^{1.0000}$ & $0.9287_{0.9282}^{0.9294}$ \\
    &   train &     $4.8043_{4.6731}^{4.9617}$ &    $9.3807_{8.8160}^{9.8568}$ &  $23.2491_{22.4420}^{24.1563}$ & $0.0495_{0.0481}^{0.0514}$ &   $4.4050_{4.2726}^{4.5169}$ &  $2.6971_{2.6393}^{2.7786}$ &   $3.5734_{3.4918}^{3.6850}$ &     $0.0019_{0.0018}^{0.0021}$ & $0.8404_{0.8371}^{0.8428}$ & $0.0267_{0.0263}^{0.0272}$ & $1.0000_{1.0000}^{1.0000}$ & $0.9281_{0.9275}^{0.9288}$ \\
     \multirow{2}{*}{110} &   test & $89.0006_{70.4894}^{102.2105}$ &    $0.0384_{0.0339}^{0.0440}$ & $88.9808_{64.5673}^{107.0304}$ &                            &   $0.0223_{0.0200}^{0.0257}$ &  $0.0246_{0.0221}^{0.0269}$ &   $0.0278_{0.0247}^{0.0310}$ &                                & $0.0018_{0.0018}^{0.0019}$ & $0.0001_{0.0001}^{0.0001}$ & $0.0033_{0.0032}^{0.0035}$ & $0.0074_{0.0064}^{0.0089}$ \\
     &   train &  $79.6927_{62.4512}^{99.1904}$ &    $0.1155_{0.0351}^{0.1924}$ &  $79.6589_{59.2236}^{98.9583}$ &                            &   $0.0209_{0.0185}^{0.0236}$ &  $0.0815_{0.0243}^{0.1566}$ &   $0.0814_{0.0264}^{0.1351}$ &                                & $0.0018_{0.0017}^{0.0020}$ & $0.0001_{0.0001}^{0.0001}$ & $0.0031_{0.0029}^{0.0033}$ & $0.0062_{0.0051}^{0.0076}$ \\
     \multirow{2}{*}{111}&   test & $99.6155_{75.0240}^{118.2089}$ &    $0.0586_{0.0520}^{0.0661}$ & $91.7758_{71.2381}^{115.4096}$ & $0.0245_{0.0222}^{0.0270}$ &   $0.0297_{0.0262}^{0.0352}$ &  $0.0327_{0.0282}^{0.0371}$ &   $0.0438_{0.0398}^{0.0492}$ &     $0.0000_{0.0000}^{0.0001}$ & $0.0017_{0.0016}^{0.0018}$ & $0.0001_{0.0001}^{0.0001}$ & $0.0034_{0.0032}^{0.0036}$ & $0.0077_{0.0066}^{0.0087}$ \\
    &   train & $90.1890_{70.8691}^{113.6430}$ &    $0.1682_{0.0543}^{0.2789}$ &  $81.6664_{64.1198}^{98.1073}$ & $0.0264_{0.0236}^{0.0290}$ &   $0.0279_{0.0258}^{0.0299}$ &  $0.1015_{0.0311}^{0.1693}$ &   $0.1043_{0.0440}^{0.1679}$ &     $0.0001_{0.0000}^{0.0001}$ & $0.0017_{0.0017}^{0.0018}$ & $0.0001_{0.0001}^{0.0001}$ & $0.0033_{0.0031}^{0.0035}$ & $0.0065_{0.0055}^{0.0075}$ \\
\bottomrule
\end{tabular}}
\caption{
\textbf{Best performance of MM fitting on PhAlkEthOH dataset only when centering is deployed and torsion energy and small rings are excluded.}
A comparison of espaloma MM fitting experiments on PhAlkEthOH dataset with/without fitting torsions, centering energies, and including three- and four-membered rings.
The {\bf R~C~T} column summarizes the experimental conditions:
{\bf R} indicates whether small (3- and 4-membered) rings were included in the dataset (1 small rings included, 0 excluded);
({\bf C}) indicates whether the potential energies for sampled snapshots of each molecule were \emph{centered} by subtracting off the mean energy for that molecule in the loss function (1 centered, 0 uncentered);
({\bf T}) indicates whether torsion energies were included in the fitting (1 included, 0 excluded).
The rest of the experimental setting is identical with Figure~\ref{sec:mm_fitting}.
Force field parameters ($\Phi_\text{FF}$):
$K_r$: bond force constant;
$b_r$: equilibrium bond length;
$K_\theta$: angle force constant;
$b_\theta$: equilibrium angle value.
The root mean squared error (RMSE) and correlation coefficient ($R^2$) between reference and predicted MM energies, as well as mean absolute percentage error (MAPE) (as a fraction rather as a percentage, i.e. not multiplied by 100), and correlation coefficient ($R^2$) between reference and predicted force field parameters. 
The sub- and superscripts report the the 95\% confidence interval of each statistics estimated from 1000 bootstrapped replicates on the molecule set.
*: Same experiment as Figure~\ref{sec:mm_fitting}.}
\label{tab:mm_fitting_comparison}
\end{table}

We closely examined several choices which appear to have significant impact on the ability to produce high-quality generalizable models for the task of learning an MM potential:
Concretely, we conduct an ablation study (Table~\ref{tab:mm_fitting_comparison}) where we repeat the experiment shown in Figure~\ref{sec:mm_fitting} considering several variations of these choices:\\
{\bf Small rings (R):} We consider whether molecules with small (3- or 4-membered) rings are included in the dataset, since these molecules assume geometries have near-degeneracies in equilibrium angles (Figure~\ref{fig:cyclopropane}). \\
{\bf Energy centering (C):} We consider the role of energy centering in the loss function, where the average potential energy over all conformations for each molecule is subtracted for both reference and predicted energies in the loss function, such that only relative conformational energy errors are penalized. 
We evaluate the error both with and without centering.
{\bf Torsion energies (T):} We also consider whether torsion energies are included as well, since recovery of torsion profiles can be challenging due to both the near-degeneracy of parameters and phases. 
Notably, while MM force fields generally specify a minimal set of a few torsion periodicities with phases of $0$ or $\pi$ explicitly specified, we use a formulation where all periodicities $n = 1, 2, \ldots, 6$ are always fit and the sign of the torsion $K$ parameter indicates whether the phase is $0$ ($K > 0$) or $\pi$ ($K < 0$).\\
Experiments for all combinations of these choices are shown in Table~\ref{tab:mm_fitting_comparison}.

We summarize our findings as follow:
First of all, best performance is only achieved when we center the predicted and reference energies in the loss function used in training, effectively training to fit only relative energies, suggesting this makes robust optimization easier. 
The rationalization could be found in Section~\ref{sec: linear_basis}.
We also examined the training trajectory of espaloma on this MM fitting task with and without the centering of predicted and reference energies (Figure~\ref{fig:w_wo_centering}) and noticed that centering alleviates, but not completely remedies the training difficulties of espaloma models.

Secondly, the inclusion of torsion energies deteriorates performance, especially if measured by uncentered RMSE. 
This could be attributed to the fact that we use a slightly different functional form to express torsion energies than reference legacy force fields. 

Lastly, with energy centering, including three- and four-membered rings in our training set results in very large uncentered RMSE without drastically changing the centered RMSE (relative conformational energy errors). 
We study this effect in detail in the next section.

\begin{figure}
    \centering
    \small
    \includegraphics[width=0.9\textwidth]{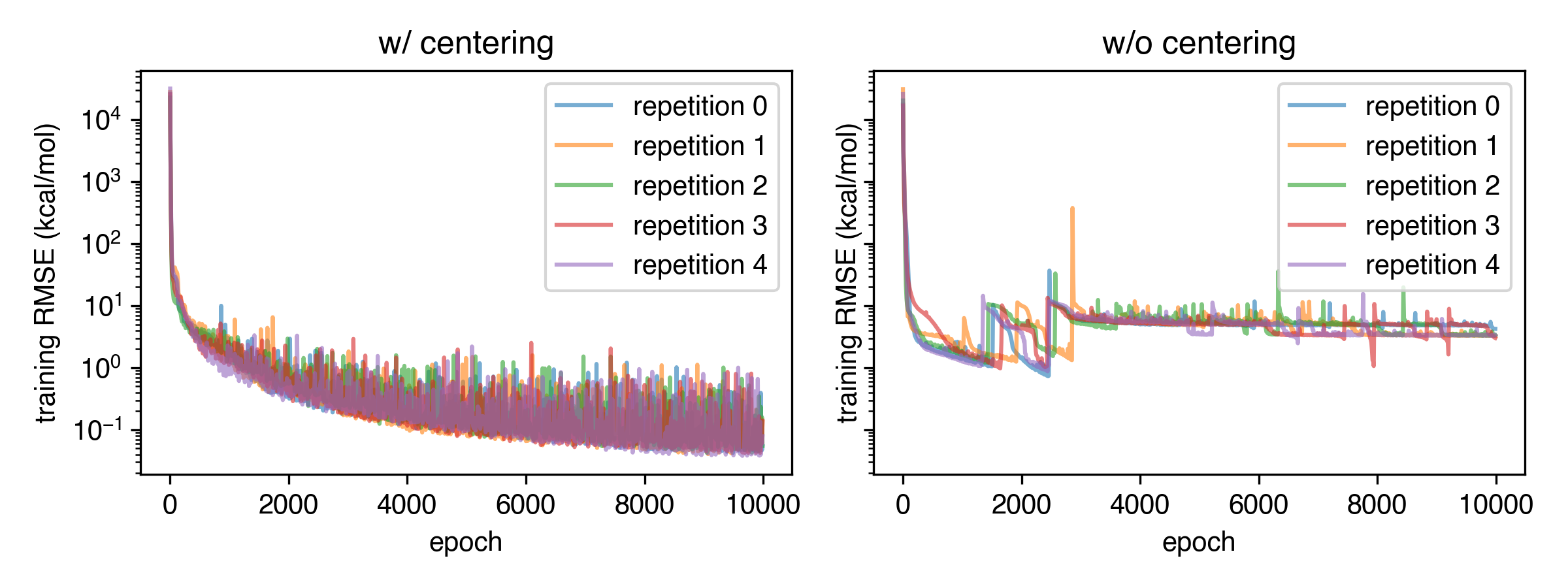}
    \begin{tabular}{c c c c c | c c c c c }
    \hline
    \multicolumn{5}{c}{w/ centering} & \multicolumn{5}{c}{w/o centering}\\
    \hline
    \multirow{2}{*}{Rep.} & \multirow{2}{*}{ES Epoch} & \multicolumn{3}{c}{RMSE at ES (kcal/mol)}
    & \multirow{2}{*}{Rep.} & \multirow{2}{*}{ES Epoch} & \multicolumn{3}{c}{RMSE at ES (kcal/mol)}\\
    &  & Train & Validation & Test 
    & &  & Train & Validation & Test\\
    \hline
0 & 9602 & 0.0456 & 0.0523 & 0.0834 & 0 & 2447 & 0.7331 & 0.6937 & 0.7705 \\
1 & 8938 & 0.0381 & 0.0500 & 0.0815 & 1 & 1902 & 1.2449 & 1.1950 & 1.1910  \\
2 & 9967 & 0.0438 & 0.0503 & 0.0545 & 2 & 1424 & 1.1646 & 1.1351 & 1.1537  \\
3 & 9696 & 0.0411 & 0.0342 & 0.0714 & 3 & 2422 & 0.9849 & 0.9121 & 0.9805  \\
4 & 9879 & 0.0377 & 0.0470 & 0.0714 & 4 & 2446 & 0.8727 & 0.8006 & 0.8460  \\
    \hline
    \end{tabular}
    \caption{
    \textbf{espaloma models are difficult to train, especially when attempting to fit to absolute (rather than relative) conformation energies.}
    We compare the training trajectory of espaloma without torsion energy on PhAlkEthOH dataset excluding three- and four-membered rings with and without centering (experiments 010 and 000 in Table~\ref{tab:mm_fitting_comparison}). 
    Example training trajectories with the same setting as Figure~\ref{sec:mm_fitting}.
    Various repetitions are plotted in different colors.
}
\label{fig:w_wo_centering}
\end{figure}

\paragraph{Case Study: Parametrization of Cyclopropane}
\label{para:cyclopropane}
\begin{figure}
    \centering
    \includegraphics[width=\textwidth]{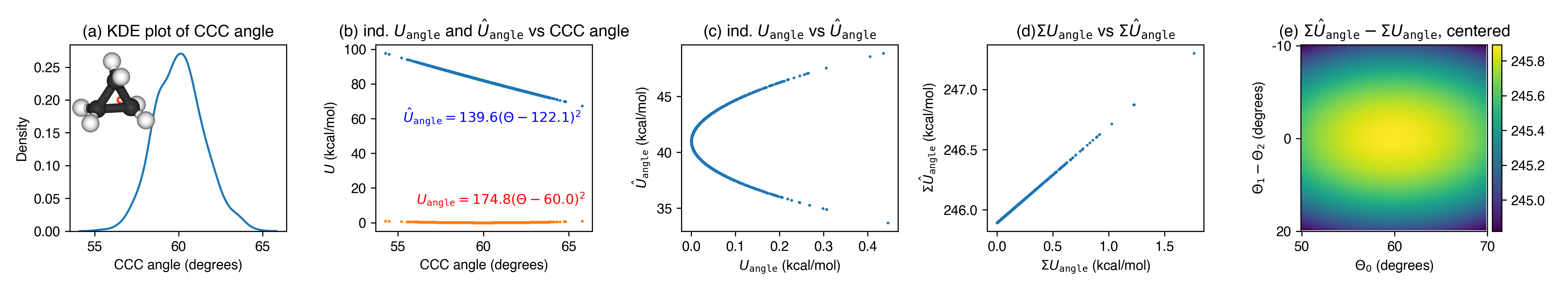}
    \caption{
\textbf{A comparison of reference and espaloma-learned parametrization of cyclopropane.}
    (a): KDE plot of CCC angle in cyclopropane.
    (b): \textbf{Individual} reference and predicted angle energy ($U_\mathtt{angle}$ and $\hat{U}_\mathtt{angle}$, energy of each CCC angle) plotted against CCC angle values in the snapshots generated.
    (c): Predicted plotted against reference \textbf{individual} CCC energy.
    (d): Predicted plotted against reference \textbf{sum} CCC energy.
    (e): Distribution of difference between reference $\Sigma U_\mathtt{angle}$ and predicted $\Sigma\hat{U}_\mathtt{angle}$ \textbf{sum} CCC energy.
    }
    \label{fig:cyclopropane}
\end{figure}

We use the model trained on PhAlkEthOH dataset with the inclusion of small rings but without torsion energies (experiment 110 in Table~\ref{tab:mm_fitting_comparison}) and apply this model to cyclopropane (SMILES: \texttt{C1CC1})
As shown in (b) in  Figure~\ref{fig:cyclopropane}, when trained with centered energy loss function, espaloma assigned equilibrium angle value $\theta_0 \approx 122.1^{\circ}$ to the CCC angle in cyclopropane while the equilibrium angle is $60^{\circ}$ in the reference GAFF-1.81 force field.
With a shift in angle force constant, while this would drastically change the energies of individual CCC angles, the sum of energies of the three CCC angles (whose angle values would sum up to $180^{\circ}$) would remain very close to the reference value up to a constant.
We furthermore compared (in Figure~\ref{fig:cyclopropane}-(e) ) the sum of CCC angle energies in these two parametrizations on a wide range of geometries of cyclopropane with three CCC angles $\theta_0, \theta_1, \theta_2$ satisfying  $ 50^{\circ} < \Theta_0 < 70^{\circ}, -10^{\circ} < \theta_1 - \theta_2 < 10 ^{\circ}$ and noticed that the difference is always a constant with fluctuations within 1 kcal/mol.

% \begin{equation}
% \mathcal{L}(\Phi_{\operatorname{NN}}) =  \prod_i P(U_{i, \text{ref}} \mid \mathcal{N}(U_{\Phi_{\text{FF}, \Theta}}(\mathbf{x}_i, \mathcal{G}), \mathbf{\Sigma}))P\left(\frac{\partial U_{i, \text{ref}}}{\partial \mathbf{x}} \mid \mathcal{N}\left(\frac{\partial U_{\Phi_{\text{FF}, \Theta}}}{\partial \mathbf{x}}(\mathbf{x}_i, \mathcal{G}), \mathbf{\Sigma}\right)\right)
% \end{equation}

%\section{Extra Results}

\begin{figure}

    \centering
    \includegraphics[width=\textwidth]{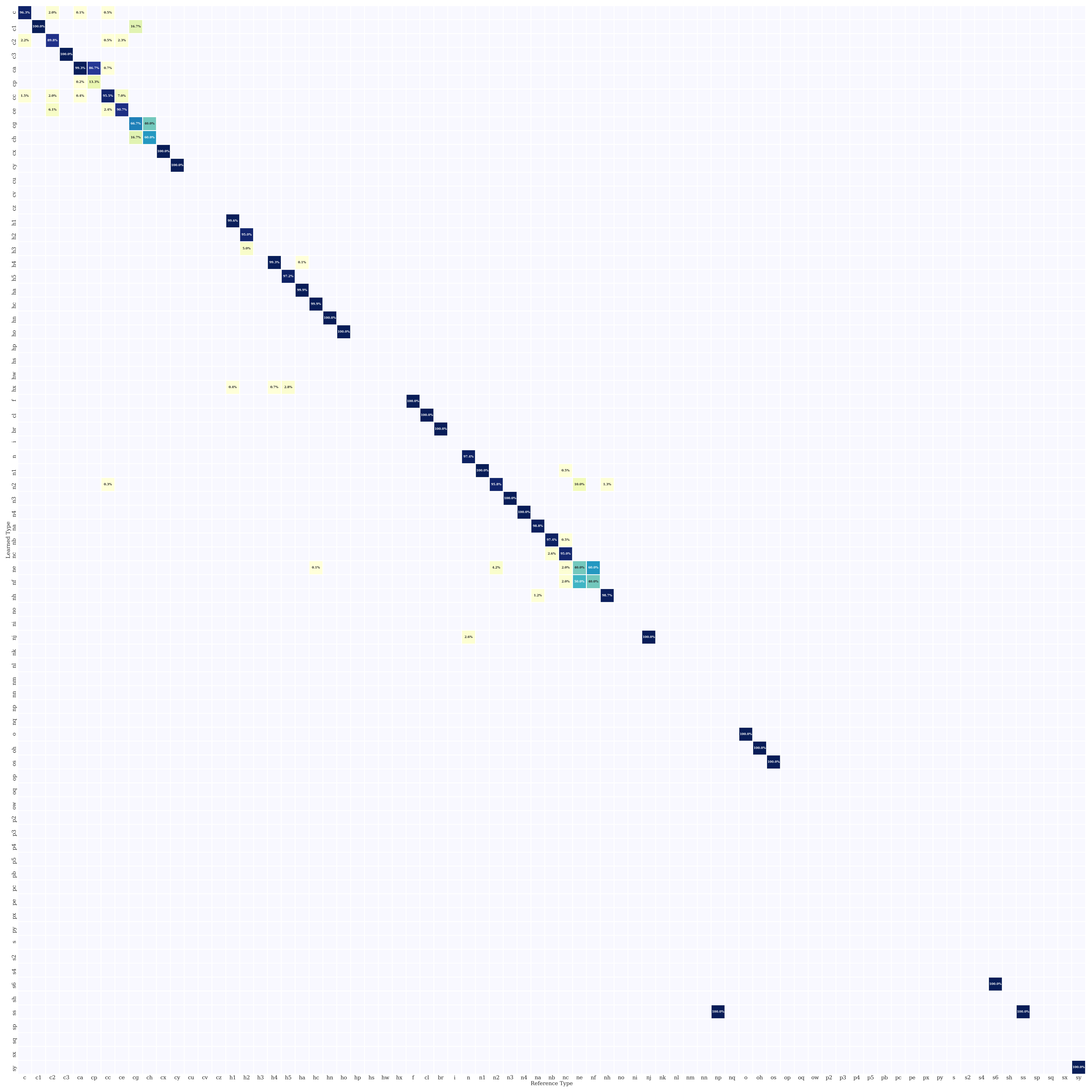}
    \caption{\textbf{Confusion Matrix: Reference vs Learned Atom Types.}
    Continuation of the confusion matrix shown in Figure \ref{fig:atom-typing-accuracy}, to include not just carbon types.
   % Distribution of predicted atom types for each reference atom type; on-diagonal values indicate agreement.
   The blank entries are because the dataset does not cover some of the rare atom types.
   }
    \label{extra-confusion}
\end{figure}

% \begin{figure}
%     \centering
%     \includegraphics[width=\textwidth]{}
%     \caption{Comparison between espaloma and Smirnoff Parameters.
%     After training [in Figure 2 or Figure 3?], we inspect the assigned values of the bond force constants $k_r$, equilibrium bond-lengths $b_r$, angle force constants $k_\theta$, and equilibrium angles $b_\theta$, shown in blue.
%     For comparison, the bond and angle parameters assigned to the same contexts by a reference forcefield [YW: smirnoff99Frosst? Parsley?] are shown in orange.
%     The bond and angle force constants assigned to these contexts are generally higher than those assigned by the reference forcefield.
%     espaloma generally assigned shorter equilibrium bond lengths and smaller equilibrium angles than the reference force field.
%     }
%     \label{all_scatter}
% \end{figure}

\section{Espaloma can easily parameterize complex heterogeneous biomolecular systems}
\label{sec:covalent-ligands}

\begin{figure}[tbp]
\centering
    \includegraphics[width=0.9\textwidth]{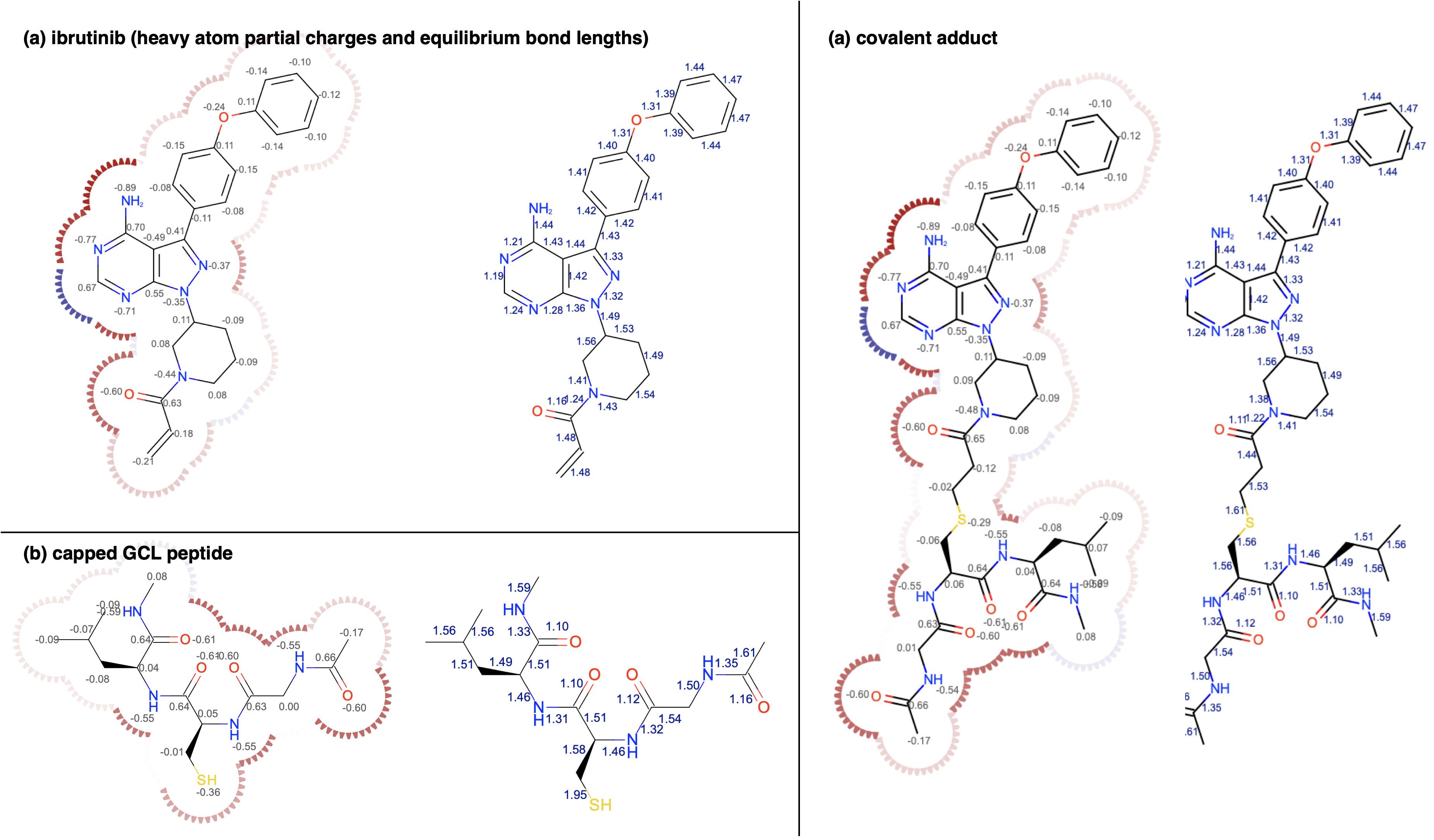}
  \caption{
  \textbf{espaloma parametrizes covalent inhibitor adduct system.}
  Heavy atom bond equilibrium length (in angstrom) and partial charge (in elemental charge) prediction for 
  (a) ibrutinib;
  (b) capped GCL peptide;
  and
  (c) their covalent adduct.}
\label{fig:btk-ibrutinib}
\end{figure}

Because biomolecular systems of interest are often highly heterogeneous environments, it is often practically incredibly difficult to model systems that consist of more than the simplest protein-ligand-solvent combinations.
For example, the latest Amber~20 release~\cite{amber2020} recommends a set of independently developed (but loosely coupled) set of force fields and complex parameterization machinery for proteins~\cite{maier2015ff14sb}, DNA~\cite{galindo2016assessing}, RNA~\cite{perez2007refinement}, water~\cite{jorgensen1983comparison}, monovalent~\cite{joung2008determination,joung2009molecular} and divalent~\cite{li2013rational,li2014taking,li2015parameterization} counterions, lipids~\cite{dickson2014lipid14}, carbohydrates~\cite{kirschner2008glycam06}, glycoconjugates~\cite{demarco2009atomic,demarco2010presentation}, small molecules~\cite{wang2004development,wang2006automatic}, post-translational modifications~\cite{khoury2013forcefield_ptm}, and nucleic acid modifications~\cite{tan2018rna}, collectively representing hundreds of human-years of effort. 
Despite this, even seemingly simple common tasks---such as modeling irreversible covalent inhibitors---represent a mind-bending technical challenge for parameterization~\cite{khan2018covalent}.

Espaloma operates on only the chemical graph of components of the system, which enables it to solve many issues with these legacy approaches:
In generating parameters, there is no practical difference between any of the biopolymer, biomolecule, organic molecule, solvent, ion, or covalently modified species that previously required enormous effort to separately build parameter sets to cover; espaloma will simply interpolate the parameters from observed examples in a continuous manner.
Instead of separately curating distinct datasets and philosophies for parameterizing each of these different classes of chemical species, extending an espaloma model to a new class of chemical species is simply a matter of extending the training datasets of quantum chemical and/or physical property data.
As we have seen, building an elementary model capable of simulating solvated protein-ligand complexes simply required incorporating quantum chemical datasets used to parameterize a small molecule force field alongside quantum chemical data for short peptides.

This espaloma model should also provide sufficient coverage to model more complex protein-ligand covalent conjugates, since the relevant chemistry to model this complex already exists in the training dataset.
To demonstrate that this leads to stable, sensible parameters, we considered the pre-reactive form of the covalent kinase inhibitor ibrutinib and the terminally-blocked form of the Val-Cys-Gly sequence that it reacts with in its target kinase BTK (Figure~\ref{fig:btk-ibrutinib}, left).
Espaloma is able to rapidly parameterize the covalent conjugate of ibrutinib to this BTK target sequence, resulting in minor changes to parameters and partial charges around the covalent warhead, but minimally perturbing the remainder of the system (Figure~\ref{fig:btk-ibrutinib}, right).
Unlike legacy approaches to parameterizing covalently-modified residues, no expensive parameter or charge refinement procedure was needed for this task.

\section{Proof that Janossy pooling is sufficiently expressive}
\label{sec:janossy-works}

Now we prove that such formulation is expressive enough to distinguish bonds consisting of distinct atoms.
An equivalent proof for angles and torsions follows similarly.
\begin{lemma}
\label{lemma:janossy-works}
There exists a neural function $\operatorname{NN}_r$, such that, 
\begin{equation}
\operatorname{NN}_r ([h_{v_i}:h_{v_j}]) + \operatorname{NN}_r ([h_{v_j}:h_{v_i}])
=
\operatorname{NN}_r ([h_{v_k}:h_{v_l}]) + \operatorname{NN}_r ([h_{v_k}:h_{v_l}])\nonumber
\end{equation}
if and only if $h_{v_i} = h_{v_l}$ \textit{and} $h_{v_j} = h_{v_k}$ \textit{or} $h_{v_i} = h_{v_l}$ \textit{and} $h_{v_j} = h_{v_k}$.
\end{lemma}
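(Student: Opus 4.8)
The plan is to split the biconditional into its two directions, observe that the ``if'' direction is free, and invest all effort into the ``only if'' direction, which amounts to constructing a single feature map realizable by $\operatorname{NN}_r$ whose symmetrized sum is injective on unordered pairs. (I read the statement as intending the symmetric right-hand side $\operatorname{NN}_r([h_{v_k}:h_{v_l}]) + \operatorname{NN}_r([h_{v_l}:h_{v_k}])$ and the unordered-pair condition $\{h_{v_i}, h_{v_j}\} = \{h_{v_k}, h_{v_l}\}$; the displayed equation and the condition each contain a duplicated term that I would correct.)

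First I would dispatch the reverse direction, which holds for \emph{any} $\operatorname{NN}_r$ whatsoever: if $\{h_{v_i}, h_{v_j}\} = \{h_{v_k}, h_{v_l}\}$ as unordered pairs, then the ordered list $([h_{v_i}:h_{v_j}], [h_{v_j}:h_{v_i}])$ is a permutation of $([h_{v_k}:h_{v_l}], [h_{v_l}:h_{v_k}])$, so the two symmetrized sums coincide term-by-term by commutativity of addition. No property of $\operatorname{NN}_r$ is used here.

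The substance is the forward direction. I would reduce it to the Deep Sets sum-pooling form by choosing $\operatorname{NN}_r$ to depend only on its first argument block, i.e. $\operatorname{NN}_r([a:b]) = \phi(a)$, so that the symmetrized sum collapses to $\phi(h_{v_i}) + \phi(h_{v_j})$; it then suffices to exhibit $\phi$ with $\phi(a) + \phi(b) = \phi(c) + \phi(d) \Rightarrow \{a,b\} = \{c,d\}$. To build such a $\phi$ I would compose two ingredients. I would first scalarize each embedding through an injective map $\psi: \mathbb{R}^D \to \mathbb{R}$ (a generic linear projection is injective on the bounded, effectively finite set of embeddings the Stage~1 network actually produces, keeping $\psi$ realizable). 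I would then lift each scalar $z = \psi(\cdot)$ to its first two power features, setting $\phi = (\psi, \psi^2)$, so the sum-pool returns the power sums $(z_1 + z_2, z_1^2 + z_2^2)$. By Vieta's formulas these are the coefficients of $(t - z_1)(t - z_2)$ and hence determine the unordered scalar multiset $\{z_1, z_2\}$; composing with injectivity of $\psi$ recovers $\{a, b\}$. Since $\phi$ is continuous (polynomial after a linear map) on a compact domain, universal approximation lets me realize it by a feed-forward net, exactly for piecewise-polynomial activations or to accuracy sufficient to preserve the finitely many strict inequalities that encode injectivity; that net is the required $\operatorname{NN}_r$.

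The hard part will be the tension between a network realizing only continuous functions and the demand for \emph{exact} injectivity over a continuous input space, since a width-two sum-pool cannot be globally injective on all of $\mathbb{R}^D$ for large $D$ without care (cf.\ the latent-dimension lower bounds of Wagstaff et al.). I would neutralize this by emphasizing that the relevant domain is the discrete, finite collection of embeddings the model can actually emit, on which a generic $\psi$ is injective and the injectivity requirement reduces to finitely many strict inequalities that are robust under approximation; a cleaner but heavier alternative would be to take the latent width equal to the pair size and invoke a Deep Sets universality theorem directly. The analogous constructions for angles and torsions follow by the same scalarize-then-power-sum recipe with triples and quadruples replacing pairs.
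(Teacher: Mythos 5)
Your proof is correct, and although it shares the paper's overall skeleton---the ``if'' direction is free by commutativity of addition, finiteness of the attainable embeddings (finite input features, finite message-passing rounds) is the enabling observation, and an explicitly constructed pair function is then realized via universal approximation---the core injectivity construction is genuinely different. The paper hashes the finitely many embedding values, sets $c = \max\{\operatorname{hash}(h_v)\}$, and defines $f_r(h_{v_i}, h_{v_j})$ to be $0$ when $\operatorname{hash}(h_{v_i}) \geq \operatorname{hash}(h_{v_j})$ and a base-$(c+1)$ pairing code otherwise, so that the symmetrized sum collapses to the positional code of the hash-sorted pair, and injectivity follows from uniqueness of that encoding; you instead reduce to a Deep Sets sum-pooling form by making $\operatorname{NN}_r$ depend only on its first block, scalarize with a generic linear map that is injective on the finite embedding set, and pool the first two powers so that Vieta's formulas recover the unordered pair. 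Your route buys three things. First, your target function is a polynomial composed with a linear map, hence continuous, so the appeal to universal approximation is legitimate as stated, whereas the paper's $f_r$ is a discontinuous case-split on $\mathbb{R}^D$ that can only be matched on the finite embedding set, a restriction the paper leaves implicit. Second, you explicitly resolve the tension between exact equality in the lemma and approximate realization by a network, reducing injectivity to finitely many strict inequalities that survive sufficiently accurate approximation; the paper's closing step (``approximates $f_r$ arbitrarily well and thus satisfies the condition'') elides exactly this point. Third, your construction distinguishes diagonal pairs: the paper's explicit $f_r$ sends every pair with equal hashes to $0 + 0 = 0$, so distinct pairs $\{a,a\}$ and $\{b,b\}$ receive identical pooled values (e.g., a C--C and an O--O bond would be conflated), a defect in the paper's witness function that your power-sum construction repairs. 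What the paper's discrete encoding buys in exchange is that it extends to $k$-tuples with the same two-case formula in a higher base, whereas your route needs the first $k$ power sums for angles ($k=3$) and torsions ($k=4$); both generalize as the lemma requires.
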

\begin{proof}
We first prove that there is a \textit{function} $f_r$ satisfying the condition in Lemma  \ref{lemma:janossy-works}.
With finite possible initial conditions and finite rounds of message passing, the possible values of $h_v$ is finite (corresponding to the finite number of unique labels in Weisfeiler-Leman~\cite{weisfeiler1968reduction} test).
We use $c$ to denote the maximum value of the hash of $h_v$:
\begin{equation}
c = \operatorname{max}\{ \operatorname{hash}(h_v)\}
\end{equation}
Thus, one example of such $f_r: \mathbb{R}^{D} \rightarrow \mathbb{N}$ satisfying the condition in Lemma \ref{lemma:janossy-works} is:
\begin{equation}
f_r(h_{v_i}, h_{v_j}) = \begin{cases}
0, \operatorname{hash}(h_{v_i}) \geq \operatorname{hash}(h_{v_j});\\
(c + 1)\operatorname{hash}(h_{v_i}) + \operatorname{hash}(h_{v_i}), 
\operatorname{hash}(h_{v_i}) < \operatorname{hash}(h_{v_j}).
\end{cases}
\end{equation}
Following the universal approximation theorem~\cite{hornik1991approximation, hornik1989multilayer}, there exists a neural function $\operatorname{NN}_r$ that approximates $f_r$ arbitrarily well and thus satisfies the condition in Lemma \ref{lemma:janossy-works}. 
\end{proof}

\end{document}